\newtheorem{theorem}{Theorem}
\newtheorem{definition}{Definition}
\newtheorem{lemma}{Lemma}
\newtheorem{example}{Example}
\newcommand{\NP}{\mbox{\bf NP}}
\def\A{\Gamma}
\def\Ast{\mathcal{A}}
\def\Xst{\mathcal{X}}
\def\B{\mathbb{B}}
\def\DG{\mathbb{D}}
\def\Fan{\mathbb{F}}
\def\G{\mathbb{G}}
\def\I{\mathbb{I}}
\def\Q{\mathbb{Q}}
\def\H{\mathbb{H}}
\def\G{\mathbb{G}}
\def\P{\mathbb{P}}
\def\Z{\mathbb{Z}}
\def\F{\Gamma}
\def\I{{\cal I}}
\def\zd{,\ldots,}
\newcommand{\Pol}{\mathrm{Pol}}
\newcommand{\wPol}{\mathrm{wPol}}
\newcommand{\CSP}{\mathrm{CSP}}
\newcommand{\VCSP}{\mathrm{VCSP}}
\newcommand{\MCHom}{\mathrm{MinCostHom}}
\title{A Reduction from Valued CSP to Min Cost Homomorphism Problem for Digraphs\footnote{This work was partially supported by the UK EPSRC grants EP/H000666/1 and EP/J000078/1.}}
\author{Robert Powell and Andrei Krokhin \\\\
Durham University\\
School of Engineering and Computing Sciences\\
DH1 3LE, Durham, UK\\
\texttt{firstname.lastname@durham.ac.uk}}
\begin{document}

\maketitle

\begin{abstract}
In a valued constraint satisfaction problem (VCSP), the goal is to find an assignment of labels to variables that minimizes a given sum of functions. Each function in the sum depends on a subset of variables, takes values which are rational numbers or infinity, and is chosen from a fixed finite set of functions called a constraint language. The case when all functions take only values 0 and infinity is known as the constraint satisfaction problem (CSP). It is known that any CSP with fixed constraint language is polynomial-time equivalent to one where the constraint language contains a single binary relation (i.e. a digraph). A recent proof of this by Bulin {\em et al.} gives such a reduction that preserves most of the algebraic properties of the constraint language that are known to characterize the complexity of the corresponding CSP. We adapt this proof to the more general setting of VCSP to show that each VCSP with a fixed finite (valued) constraint language is equivalent to one where the constraint language consists of one $\{0,\infty\}$-valued binary function (i.e. a digraph) and one finite-valued unary function, the latter problem known as the (extended) Minimum Cost Homomorphism Problem for digraphs. We also show that our reduction preserves some important algebraic properties of the (valued) constraint language.
 \end{abstract}

\section{Introduction}

The constraint satisfaction problem (CSP) is a well studied framework that can express a number of combinatorial problems including propositional satisfiability, graph colouring, and systems of equations. An instance of CSP consists of a set of variables, a set of labels those variables can take, and a set of constraints specifying combinations of labels that certain subsets of the variables can take, with the goal of finding an assignment of labels to variables that satisfies the constraints. It is well known that CSP is in general $\NP$-complete, but by restricting the problem to a fixed set of constraint types, generally referred to as constraint languages~\cite{Bula05:algdichotomy,FederVardi98:cs}, one can obtain polynomial time solvable subproblems. Classifying the complexity of CSPs with a fixed constraint language has been a major area of research since Schaefer's pioneering dichotomy result~\cite{Sch78:comp},
see, e.g.~\cite{Barto09:boundedWidth,Bula06:threeElement,Bula05:algdichotomy,Bulin13:reduc,Bulin14:reduc,FederVardi98:cs}.

There are several natural optimisation versions of CSP. We will consider two of them: Minimum Cost Homomorphism (MinCostHom) and Valued CSP (VCSP).
In MinCostHom, an instance of CSP comes with additional cost functions specifying the cost of assigning each label to each variable; the goal is then to decide whether a satisfying assignment exists and if so find one with minimal total cost. Note that one may naturally
use a restricted set of available cost functions, considering it as part of a constraint language. Complexity classification results for MinCostHom can be found, e.g., in~\cite{Gutin:mincostdichotomy,Hell12:sidma,Tak10:MinCostHom,Tak12:extMinCostHom,Uppman13:icalp,Uppman14:mincosthom}.
The VCSP is the most general optimisation version of CSP, where each constraint, instead of specifying allowed combinations of labels for its variables, assigns each combination of labels a cost, which is a rational number or infinity (the latter indicates disallowed combinations).
The goal is then to find an assignment with minimal total cost. Naturally, constraint languages then consist of cost functions instead of relations.
There has been significant interest in classifying the complexity of VCSPs, see survey~\cite{Jeavons14:survey}, also~\cite{Cohen12:weightedPol,Cohen06:soft,Huber13:skew,Och12:rigidCore,Thap12:linearProg,Thap12:dichotomy} for recent results.

Note that the CSP deals entirely with the feasibility issue (can all constraints be satisfied?), MinCostHom adds a limited
optimisation aspect, as cost functions are applied only to individual variables, while VCSP fully incorporates both feasibility and optimisation issues.
While the full complexity classification even for CSP is open, and in fact is a major open problem~\cite{Bula05:algdichotomy,FederVardi98:cs},
it is interesting to find out how the difficulties of classifying these three frameworks relate to each other.
We show, somewhat surprisingly, that classification of VCSP reduces to classification of MinCostHom, even to the case
when MinCostHom has constraint language consisting only of a digraph and one (unary) cost function.

It is well known that the CSP can be cast as a homomorphism problem for relational structures~\cite{FederVardi98:cs}, the
special case being the (much studied) homomorphism problem for (di)graphs~\cite{Hell04:book}. The CSP with a fixed constraint language
then becomes the problem of deciding the existence of a homomorphism into a fixed relational structure. It was shown in~\cite{FederVardi98:cs} that
every CSP (with a fixed target structure) is polynomial-time equivalent to a digraph homomorphism problem (with a fixed target digraph)
and recently Bulin et al.~\cite{Bulin13:reduc,Bulin14:reduc} developed a variant of this reduction that maintained a number of useful algebraic properties that are crucial in the (extremely successful) algebraic approach to the (V)CSP~\cite{Barto09:boundedWidth,Bula05:algdichotomy,Cohen12:weightedPol,Jeavons14:survey,Thap12:linearProg}. It was explicitly asked in~\cite{Bulin14:reduc} whether their technique can be extended to other constraint problems such as VCSP, and we show that indeed it can. In fact, our proofs reuse many parts from~\cite{Bulin13:reduc,Bulin14:reduc}.

We note that after the results of this paper were announced in~\cite{Jeavons14:survey}, it was shown in~\cite{Kolmogorov15:complexity}
that complexity classification for VCSPs reduces to such a classification for CSPs.

\section{Preliminaries}


\subsection{Valued Constraint Satisfaction and Homomorphism Problems}
Let $D$ be a fixed finite set. Let $\Q_+$ ($\overline{\Q}_+$) denote the set of non-negative rational numbers (with positive infinity, respectively). Let $F^{(m)}_D$ be the set of all $m$-ary functions from $D^m$ to $\overline{\Q}_+$, and then $F_D = \bigcup_{m=1}^\infty F^{(m)}_D$. We will often call functions in $F_D$ {\em cost functions}.
 For the remainder of this paper we only consider such finite constraint languages.

\begin{definition}\label{VCSPInstance}
Let $V=\{x_1\zd x_n\}$ be a set of variables. A {\em valued constraint} over $V$ is an expression,
$\phi(\mathbf{x})$, where $\mathbf{x}\in V^m$ and $\phi\in F^{(m)}_D$.

An instance $\I$ of VCSP is a function $f_{\I}(x_1\zd x_n) = \sum_{i=1}^q{w_i\cdot \phi_i(\mathbf{x}_i)}$ where each $\phi_i(\mathbf{x}_i)$ is a valued constraint over $V_\I=\{x_1\zd x_n\}$ and the $w_i$'s are non-negative rational weights. The goal is to find a mapping $h:V\rightarrow D$ that minimises $f_{\I}$.
\end{definition}

\begin{definition}\label{VCSPGamma}
A {\em valued constraint language}, $\F$, over $D$ is a {\em finite} subset of $F_D$.
For a valued constraint language $\Gamma$, $\VCSP(\Gamma)$ is the class of all instances of VCSP where the cost functions of all the valued constraints are in $\Gamma$.
\end{definition}

The VCSP framework is sufficiently flexible to describe a number of well known problems, as highlighted in the following examples.

\begin{example}\label{MaxCutExample}
Consider the well known $\NP$-hard problem MAX CUT, where given an edge-weighted graph the aim is to partition the vertices into two sets and maximise the total weight of the edges with endpoints in different sets. It is easy to see that this problem can be equivalently expressed as a VCSP. Let $\phi_{MC}:\{0,1\}^2 \to \Q$ be such that $\phi_{MC}(0,1)=\phi_{MC}(1,0)<\phi_{MC}(0,0)=\phi_{MC}(1,1)$. Let $\Gamma_{MC}=\{\phi_{mc}\}$, then $\VCSP(\Gamma_{MC})$ is equivalent to MAX CUT, thus $\VCSP(\Gamma_{MC})$ is also $\NP$-hard.
\end{example}

\begin{example}\label{CSPasVCSP}
The standard CSP can be expressed as a VCSP, where all cost functions are $\{0,\infty\}$-valued, representing allowed and disallowed tuples, respectively. Valued constraints using such $\{0,\infty\}$-valued cost functions are often referred to as {\em crisp} constraints.
\end{example}

\begin{example}\label{submodularEx}
Let $(D,\vee,\wedge)$ be an arbitrary lattice. A function $\phi:D^n\rightarrow\Q_+$ is called submodular if it satisfies the inequality
\[\phi(\textbf{a}\vee\textbf{b})+\phi(\textbf{a}\wedge\textbf{b})\leq \phi(\textbf{a})+\phi(\textbf{b})\; \mbox{ for all } \textbf{a},\textbf{b}\in D^n.\]
If a constraint language $\Gamma$ consists of submodular functions then $\VCSP(\Gamma)$ is tractable~\cite{Thap12:linearProg}.
\end{example}

More examples of (hard and tractable) problems that can be expressed as VCSPs can be found in recent survey~\cite{Jeavons14:survey}.


We now explain how CSP and VCSP can be cast as homomorphism problems. 

\begin{definition}
Let $\tau$ be a {\em relational signature}, that is a set of relational symbols $R$ each with an associated arity $ar(R)$. A {\em (relational) $\tau$-structure} $\Ast$ consists of a finite domain $D$ together with a relation $R^{\Ast}$ on $D$ of arity $ar(R)$ for each $R\in\tau$.
If $\Xst$ and $\Ast$ are $\tau$-structures with domains $X$ and $D$, respectively, then a {\em homomorphism} from $\Xst$ to $\Ast$ is a function $h:X\rightarrow D$ such that, for each $R\in\tau$ and each tuple $\textbf{x}\in R^{\Xst}$, we have $h(\textbf{x})\in R^{\Ast}$
where $h$ is applied component-wise. In this case, we write $h:\Xst\rightarrow \Ast$.
\end{definition}

\begin{definition}
Let $\Ast$ be a finite relational $\tau$-structure. Then $\CSP(\Ast)$ is the following decision problem: Given a $\tau$-structure $\Xst$, is there a homomorphism from $\Xst$ to $\Ast$?
\end{definition}

\begin{example}
The digraph homomorphism problem for a fixed digraph $\H$ asks whether an input digraph $\G$ admits a homomorphism to $\H$, that is whether there is a mapping $h:V^{\G}\rightarrow V^{\H}$ such that if $(u,v)\in E^{\G}$ then $(h(u),h(v))\in E^{\H}$.
This problem is also known as the $\H$-colouring problem~\cite{Hell04:book}. If $\H$ is the complete graph on $k$ vertices then
this problem is the well-known $k$-colouring problem.
\end{example}

It is known that restricting CSP from general structures to digraphs does not reduce the difficulty of classifying the complexity.

\begin{theorem}[\cite{Bulin13:reduc,Bulin14:reduc,FederVardi98:cs}]
For every structure $\Ast$, there is a digraph $\H$ such that $\CSP(\Ast)$ and $\CSP(\H)$ are polynomial-time equivalent.
\end{theorem}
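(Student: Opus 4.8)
The plan is to construct, for a given relational structure $\Ast$ with domain $D$, a digraph $\H$ together with a way of translating any input $\tau$-structure $\Xst$ into a digraph $\G$, so that $\G$ admits a homomorphism to $\H$ if and only if $\Xst$ admits one to $\Ast$, with both the construction of $\H$ from $\Ast$ and of $\G$ from $\Xst$ being computable in polynomial time (and the latter uniformly in the input). The classical approach proceeds in two stages. First I would pass from an arbitrary signature $\tau$ to a single-relation structure: any relation of arity $k$ can be encoded using a bipartite-style gadget, so without loss of generality one may assume $\Ast$ has a single relation, and then encode tuples of that relation as paths or trees. The cleanest route, following \cite{FederVardi98:cs} and its refinement in \cite{Bulin13:reduc,Bulin14:reduc}, is to represent each element of $D$ and each constraint tuple by a distinguished gadget (a small digraph), wiring variables to the coordinate-positions of constraints.

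Concretely, I would build $\H$ so that its vertices fall into groups that simulate the elements of $D$ and the relations of $\Ast$. A standard device is to attach to each potential ``slot'' a directed path whose length or internal structure is rigid enough that any homomorphism is forced to respect the intended role of that slot; rigidity (asymmetry) of these path-gadgets is what prevents a homomorphism from collapsing distinct coordinates or confusing an element-vertex with a relation-vertex. On the instance side, I would replace each variable $x$ of $\Xst$ by a copy of an ``element gadget'' and each tuple in a relation $R^{\Xst}$ by a copy of a ``constraint gadget'' whose free ends are identified with the element gadgets of the variables occurring in that tuple. The correctness argument then has two directions: given $h:\Xst\to\Ast$ one reads off a homomorphism $\G\to\H$ by sending each gadget to the corresponding gadget in $\H$; conversely, given $g:\G\to\H$, rigidity forces $g$ to map element gadgets to element gadgets in a way that defines a map $X\to D$, and the presence of each constraint gadget forces the resulting tuple to lie in the appropriate relation of $\Ast$.

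The two equivalences that must be checked are the forward soundness (every $\Xst\to\Ast$ yields $\G\to\H$), which is routine because the gadgets are designed to be homomorphic images of their counterparts, and the backward completeness (every $\G\to\H$ descends to $\Xst\to\Ast$), which is the delicate part and where all the design effort goes. The main obstacle I anticipate is arranging the gadgets so that the backward direction is forced: one must ensure that the only homomorphisms $\G\to\H$ are those coming from genuine homomorphisms $\Xst\to\Ast$, ruling out ``spurious'' maps that exploit unintended symmetries or short-circuits in the digraph. This is precisely what the careful rigid path constructions of \cite{Bulin13:reduc,Bulin14:reduc} achieve, and since those works additionally preserve algebraic invariants, I would adopt their gadget layout rather than reinvent a looser encoding; the polynomial-time bounds then follow immediately since each gadget has size bounded by a constant depending only on $\Ast$ (not on $\Xst$), so $\G$ has size linear in that of $\Xst$.
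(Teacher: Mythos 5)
Your sketch captures the forward half of the equivalence --- the gadget replacement that takes an instance $\Xst$ of $\CSP(\Ast)$ to a digraph $\G$ with $\Xst\to\Ast$ iff $\G\to\H$ --- and this is indeed the approach of the paper (and of \cite{FederVardi98:cs,Bulin13:reduc,Bulin14:reduc}): reduce to a single relation, then replace each pair $(d,\mathbf{a})\in D\times R$ by an oriented path $\Q_{\{i:d=a_i\}}$ built from zigzags and single edges. But the theorem asserts polynomial-time \emph{equivalence}, and your proposal addresses only the reduction $\CSP(\Ast)\le_p\CSP(\H)$. What you call ``backward completeness'' --- that every homomorphism $\G\to\H$ descends to one $\Xst\to\Ast$ --- is still part of the correctness of the \emph{forward} reduction, since it only concerns digraphs $\G$ arising as images of the translation. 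The other reduction, $\CSP(\H)\le_p\CSP(\Ast)$, must handle an \emph{arbitrary} input digraph, which need not decompose into your gadgets at all, and this is the genuinely hard half: in the paper it occupies all of the reverse-direction argument (checking the input is balanced and of the right height, solving short components separately via the path-satisfiability lemmas, and then constructing a structure over $\Ast$ via the tuple lists $\B'$ and the equality graph). Nothing in your proposal supplies or even gestures at this direction.

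A second, smaller gap: you leave the actual gadget design and the forcing arguments entirely to the cited works (``I would adopt their gadget layout''), so the delicate content you yourself identify as where ``all the design effort goes'' is not actually carried out. In particular you do not specify how the coordinate positions are encoded (the single-edge versus zigzag distinction in $\Q_S$), nor prove the rigidity lemmas (e.g.\ that a connected balanced digraph has a unique minimal $S$ with a homomorphism into $\Q_S$) on which both directions of the correctness argument rest. As written, the proposal is a plan that defers both the construction and the harder of the two reductions to the references the theorem already cites.
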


\begin{definition}\label{def:MCHom}
In the $\MCHom(\Ast)$ problem, one is given an input $\tau$-structure $\Xst$ and, in addition, for each $x\in X$, a unary cost function $u_x:D\rightarrow \Q_+$ specifying the cost of mapping $x$ to each individual element in $D$. The goal is to decide whether there is a homomorphism $h$ from $\Xst$ to $\Ast$ and if so find one of minimal total cost $\sum_{x\in X}{u_x(h(x))}$. For a set $\Delta$ of unary cost functions,
let $\MCHom(\Ast,\Delta)$ denote the subproblem of $\MCHom(\Ast)$ where all unary functions in instances are of the form $w\cdot u$ where
$w\in\Q_+$ and $u\in \Delta$. If $\Delta=\{u\}$, we write simply $\MCHom(\Ast,u)$.
\end{definition}

The problem $\MCHom(\Ast)$ was studied in a series of papers, and complete complexity classifications were given
in~\cite{Gutin:mincostdichotomy} for undirected graphs, in~\cite{Hell12:sidma} for digraphs, and in~\cite{Tak10:MinCostHom} for general
structures. Partial complexity classifications for the problem $\MCHom(\Ast,\Delta)$ were obtained in~\cite{Tak12:extMinCostHom,Uppman13:icalp,Uppman14:mincosthom}.
One can see that MinCostHom is an intermediate problem between CSP and VCSP, as there is an optimisation aspect, but it is limited in the sense that it is controlled by separate unary functions, without explicit interactions of variables.

We will now define VCSP as a homomorphism problem, following~\cite{Thap12:linearProg} (see also~\cite{Cohen12:weightedPol}). This will allow us to easily reuse many results from~\cite{Bulin13:reduc,Bulin14:reduc}.


\begin{definition}\label{homomorphismVCSP}
A {\em weighted relation} $\rho$ of arity $k$ on a set is a function from some $k$-ary relation $R$ on this set to $\Q_+$.
 A {\em weighted $\tau$-structure} $w\Ast$ is $\tau$-structure such that each relation $\rho^{w\Ast}$ in $w\Ast$ is weighted, i.e. $\rho^{w\Ast}:R^{\Ast}\rightarrow \Q_+$. By ignoring the weight functions, one can turn a weighted $\tau$-structure $w\Ast$ into an ordinary, unweighted, $\tau$-structure $\Ast$.

An instance of $\VCSP(w\Ast)$ is given by a weighted $\tau$-structure $w\Xst$. A feasible solution to this instance is a homomorphism $h$ from $\Xst$ to $\Ast$, and its cost is given by
\[cost(h)=\sum_{R\in\tau,\mathbf{x}\in R^{\Xst}}{\rho^{w\Xst}(\textbf{x})\cdot \rho^{w\Ast}(h(\textbf{x}))}.\]
The goal is to decide if such a homomorphism exists, and if so find one with minimal cost.
\end{definition}

\begin{definition}\label{Feas}
For an  $m$-ary cost function $\phi:D^m\rightarrow\overline{\Q}_+$, we define the \textit{feasibility relation}, $Feas(\phi)$, of $\phi$ as follows:
$(x_1,x_2,...,x_m)\in Feas(\phi)\Leftrightarrow \phi(x_1,x_2,...,x_m)<\infty.$
\end{definition}

There is an obvious correspondence between valued constraint languages and weighted structures: any constraint language $\Gamma$ can be turned into
a weighted structure $w\Ast$ as follows. Turn each function $\phi\in\Gamma$ into a weighted relation $\rho:Feas(\phi)\rightarrow \Q_+$, simply by ignoring the tuples with infinite cost, then introduce a signature $\tau$ containing a symbol $R_\phi$ of arity $k$ for each function $\phi\in \Gamma$
of arity $k$. Clearly, one obtains a weighted $\tau$-structure. One can also reverse this procedure to convert a weighted structure into a valued constraint language.

The correspondence between $\VCSP(w\Ast)$ and $\VCSP(\Gamma)$ can be seen as follows. If $w\Xst$ is an instance of $\VCSP(w\Ast)$,
then one can view the domain of $w\Xst$ as the set of variables in an instance $\I$ of $\VCSP(\Gamma)$, and each tuple $\mathbf{x}\in R^{\Xst}$
gives rise to a valued constraint $\phi(\mathbf{x})$ with weight $\rho^{w\Xst}(\mathbf{x})$ where $\phi\in F_D$ is the function obtained
by extending $\rho^{w\Ast}$ with infinite values. Then the homomorphisms from $\Xst$ to $\Ast$ are precisely the solutions to $\I$ of finite cost, and the correspondence preserves the costs. Thus, $\VCSP(w\Ast)$ and $\VCSP(\Gamma)$ are effectively the same problem.

Note that if all functions in $\Gamma$ are $\{0,\infty\}$-valued, i.e. $\VCSP(\Gamma)$ is in fact a CSP, then the weighted structure $w\Ast$ obtained from $\Gamma$ as described above will be 0-weighted, i.e. effectively unweighted, and $\VCSP(w\Ast)$ is the same problem as $\CSP(\Ast)$.
It also clear that if all functions in $\Gamma$ are $\{0,\infty\}$-valued or unary finite-valued, then $\VCSP(\Gamma)$ is $\MCHom(\Ast,\Delta)$
for the obvious choice of $\Ast$ and $\Delta$.

\subsection{Algebra}

\begin{definition}\label{polymorphism}
Let $\rho:R\rightarrow \Q_+$ be a weighted relation on $D$ with underlying relation $R$. We say that an operation $f:D^k\rightarrow D$ is a {\em polymorphism} of $\rho$ (and of $R$) if, for any $\bm{x_1},\bm{x_2},...,\bm{x_k}\in R$ we have $f(\bm{x_1},\bm{x_2},...,\bm{x_k})\in R$, where $f$ is applied component-wise.

 For a weighted structure $w\Ast$, we denote by $\Pol(w\Ast)$ the set of all operations which are polymorphisms of each $\rho$ in $w\Ast$, and by $\Pol^{(k)}(w\Ast)$ the set of $k$-ary operations in  Pol$(w\Ast)$.
\end{definition}

Polymorphisms have played a major role in the classifications of complexity for ordinary CSP~\cite{Barto09:boundedWidth,Bula06:threeElement,Bula05:algdichotomy}, but for VCSP we need the more general notion of weighted polymorphisms~\cite{Cohen12:weightedPol}.

\begin{definition}\label{weightedPoly}
Let $\rho:R\rightarrow \Q_+$ be a weighted relation and let $C\subseteq \Pol^{(k)}(\rho)$. A function $\omega:C\rightarrow\Q$ is a $k$-ary {\em weighted polymorphism} of $\rho$ if it satisfies the following conditions:
\begin{itemize}
	\item $\sum_{f\in C} \omega(f) = 0$;
	\item if $\omega(f)<0$ then $f$ is a projection, i.e., for some $1\le i\le k$, $f(x_1,\ldots,x_k)=x_i$;
	\item for any $\bm{x_1},\bm{x_2},...,\bm{x_k}\in R$
	\[\sum\limits_{f\in C}\omega(f)\cdot \rho(f(\bm{x_1},\bm{x_2},...,\bm{x_k}))\leq 0\]
\end{itemize}
Let $\wPol(\phi)$ denote the set of all weighted polymorphisms of $\rho$.
\end{definition}

\begin{example}\label{submodAsWeighted}
We can rewrite the submodularity condition from Example~\ref{submodularEx} as a binary weighted polymorphism. Let the functions $f_1,f_2,f_3,f_4$ be the operations $\vee$, $\wedge$, $\mathrm{Pr}_1$,$\mathrm{Pr}_2$ where the last two operations are projections on the first and second coordinate, respectively.
Consider the weighted operation that assigns these operations the respective weights $1, 1 ,-1 , -1$. It is easy to check that any function $\phi\in F_D$ has this weighted polymorphism if and only if it is submodular.
\end{example}

For a weighted structure $w\Ast$, let $\wPol(w\Ast)=\bigcap_{\rho\in w\Ast}{\wPol(\rho)}$.
Weighted polymorphisms are powerful algebraic tools in the study of VCSP, and it has been shown that the complexity of valued constraint languages can be characterised by its weighted polymorphisms~\cite{Cohen12:weightedPol}.

\begin{theorem}[\cite{Cohen12:weightedPol}]
If $w\Ast_1$ and $w\Ast_2$ are weighted structures on $D$ such that $\wPol(w\Ast_1)\subseteq \wPol(w\Ast_2)$
then $\VCSP(w\Ast_2)$ is polynomial-time reducible to $\VCSP(w\Ast_1)$.
\end{theorem}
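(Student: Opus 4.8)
The plan is to combine an algebraic Galois-connection result with a syntactic gadget reduction, in the standard template of the algebraic approach. First I would formalise what it means to \emph{express} a cost function over a weighted structure. Say that $\phi$ is expressible over $w\Ast$ (equivalently, over the corresponding constraint language $\Gamma$) if it can be obtained from the cost functions of $w\Ast$ by finitely many applications of: scaling by a non-negative rational; pointwise addition of cost functions over a common scope obtained by permuting and identifying variables; addition of a non-negative rational constant; and minimisation over a set of variables, i.e.\ $\phi(\mathbf{x})=\min_{\mathbf{y}}\psi(\mathbf{x},\mathbf{y})$. Write $\langle w\Ast\rangle$ for the set of all cost functions so expressible (the weighted relational clone generated by $w\Ast$). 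The point that drives the reduction is that each of these four operations preserves weighted polymorphisms, so $\wPol(w\Ast)\subseteq\wPol(\phi)$ for every $\phi\in\langle w\Ast\rangle$; verifying this closure is a routine but necessary check, carried out operation by operation directly against the three conditions of Definition~\ref{weightedPoly}.

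The heart of the argument is the converse, the Galois connection: $\phi\in\langle w\Ast\rangle$ whenever $\wPol(w\Ast)\subseteq\wPol(\phi)$. I would prove this by contradiction via linear-programming duality. Since $D$ is finite, the cost functions of a fixed scope form a finite-dimensional space, and the cost functions expressible over $w\Ast$ with that scope form a polyhedral cone inside it. Assuming $\phi$ lies outside this cone, the separation theorem (Farkas' lemma) yields a separating hyperplane. I would then show that the coefficients of this separator, after a suitable normalisation and rearrangement, satisfy the three defining conditions of a weighted polymorphism of $w\Ast$ (weights summing to zero, negative weights only on projections, and the improvement inequality), producing some $\omega\in\wPol(w\Ast)\setminus\wPol(\phi)$ and contradicting the hypothesis. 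This duality step, together with the care needed to convert an abstract separator into a genuine weighted polymorphism (isolating the projections and normalising the weights), is where I expect the main obstacle to lie.

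With the Galois connection established, I would apply it to each weighted relation $\rho^{w\Ast_2}$ of $w\Ast_2$: from $\wPol(w\Ast_1)\subseteq\wPol(w\Ast_2)$ it follows that $\wPol(w\Ast_1)\subseteq\wPol(\rho^{w\Ast_2})$ for each such $\rho$, and hence every cost function of $w\Ast_2$ is expressible over $w\Ast_1$ by some fixed gadget $G_\rho$. The reduction is then syntactic. Given an instance $w\Xst_2$ of $\VCSP(w\Ast_2)$, I would replace each valued constraint whose cost function is $\rho^{w\Ast_2}$ by the gadget $G_\rho$, introducing fresh auxiliary variables for the minimised coordinates and instantiating the weighted sum of $w\Ast_1$-constraints prescribed by $G_\rho$, with the input weight $\rho^{w\Xst_2}(\mathbf{x})$ folded in. Because $w\Ast_2$ is fixed and finite, there are finitely many gadgets, each of bounded size, so the resulting $\VCSP(w\Ast_1)$-instance has size linear in the original and is computable in polynomial time.

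Finally I would verify correctness: minimisation over the auxiliary variables exactly reproduces the value of each $\rho^{w\Ast_2}$-constraint, so feasible solutions of the two instances correspond after projecting away the auxiliary variables, and their costs agree up to the additive and multiplicative constants accumulated in the gadgets, which are computable in advance. Thus an optimal solution (and the optimum value) of the constructed instance yields one for the original, establishing the claimed polynomial-time reduction from $\VCSP(w\Ast_2)$ to $\VCSP(w\Ast_1)$. As a sanity check, specialising to $\{0,\infty\}$-valued languages recovers the classical fact that $\Pol$-inclusion gives a reduction between the corresponding $\CSP$s via primitive positive definitions.
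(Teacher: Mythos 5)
The paper does not prove this statement: it is imported verbatim from the cited reference \cite{Cohen12:weightedPol}, so there is no in-paper proof to compare against. Your plan reconstructs essentially the architecture of that reference's argument: the easy closure direction (expressibility preserves weighted polymorphisms, checked operation by operation against Definition~\ref{weightedPoly}), the hard Galois-connection direction via LP duality, and then a syntactic gadget substitution whose correctness reduces to the semantics of the minimisation step. That is the right decomposition, and for the theorem as stated you only need the hard direction applied to the finitely many weighted relations of $w\Ast_2$. The one place where your sketch is glib is the assertion that the cost functions of a fixed scope expressible over $w\Ast$ form a polyhedral cone: because expressibility allows minimisation over arbitrarily many auxiliary variables, the expressible functions of arity $m$ are a priori an increasing union over gadget sizes, and neither closedness nor finite generation is immediate. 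The standard repair, and the one used in the source, is to replace arbitrary gadgets by a single canonical gadget of bounded size (the valued analogue of the indicator problem, with one auxiliary variable per $k$-tuple of domain elements), show that expressibility is equivalent to expressibility via that gadget, and only then set up the finite-dimensional LP to which Farkas' lemma and the projection-isolating normalisation apply. Without that device the separation step has nothing finite-dimensional to separate from, so you should treat it as a required lemma rather than a routine remark.
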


Weighted polymorphisms (and their special cases) play a key role in complexity classifications for VCSP~\cite{Cohen12:weightedPol,Cohen06:soft,Jeavons14:survey,Thap12:dichotomy,Uppman13:icalp,Uppman14:mincosthom}.

We now introduce the notion of a core valued constraint language. Intuitively, a valued constraint language is not a core if there is some element of its domain, $a\in D$, such that any instance has an optimal solution that does not use $a$. We can simply remove $a$ from $D$, reducing the problem to one on a smaller domain. 
Formally, cores are defined as  follows.

\begin{definition}\label{rigidCore}
A weighted structure $w\Ast$ is a {\em core} if all its unary polymorphisms are bijections, and it is a {\em rigid core} if the identity mapping is the only unary polymorphism of $w\Ast$.
\end{definition}

\begin{lemma}[\cite{Och12:rigidCore}]
For every weighted structure $w\Ast$, there is a weighted structure $w\Ast'$  such that $w\Ast'$ is a rigid core and $\VCSP(w\Ast')$ is polynomial-time equivalent to $\VCSP(w\Ast)$.
\end{lemma}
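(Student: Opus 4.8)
The plan is to prove the lemma in two stages: first reduce $w\Ast$ to a \emph{core}, and then reduce a core to a \emph{rigid core}. The recurring difficulty, absent in the purely relational (CSP) setting, is that every folding or identification of domain elements must respect the weights; throughout I would track which unary operations are \emph{cost-non-increasing}, meaning $\rho^{w\Ast}(f(\mathbf{x}))\le\rho^{w\Ast}(\mathbf{x})$ for every weighted relation of $w\Ast$ and every tuple $\mathbf{x}$ in its domain. Equivalently, $f$ is cost-non-increasing exactly when the unary weighting giving weight $1$ to $f$ and $-1$ to the identity is a weighted polymorphism of $w\Ast$; such operations are closed under composition and contain the identity.

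For the coring stage I would choose a cost-non-increasing unary polymorphism $f$ whose image $D'=f(D)$ is as small as possible and let $w\Ast'$ be the substructure induced on $D'$. Folding works as follows: for any instance and any feasible solution $h$, the map $f\circ h$ is again feasible (as $f$ is an endomorphism of the underlying structure) and costs no more than $h$, so every instance attains its optimum inside $D'$. This yields $\VCSP(w\Ast)\le_p\VCSP(w\Ast')$ by restricting to $D'$, and the reverse reduction follows because the same folding shows that an instance read over the larger domain $D$ has the same optimum as over $D'$. Minimality of $|D'|$ then forces every cost-non-increasing unary polymorphism of $w\Ast'$ to be a bijection (otherwise composing it with $f$ would produce a cost-non-increasing operation of strictly smaller image), which is the defining property of a core. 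Moreover a cost-non-increasing \emph{bijection} is automatically cost-\emph{preserving}, since it permutes each underlying relation $R$, so a pointwise inequality whose total over $R$ is an equality must hold with equality everywhere. Thus the cost-non-increasing unary polymorphisms of the core form a finite group $G$ of cost-preserving automorphisms.

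For the rigidifying stage, given a core $w\Ast$ I would form $w\Ast'$ by adjoining, for each $a\in D$, a fresh crisp unary relation $\{a\}$. Any endomorphism of $w\Ast'$ must preserve every singleton and so is the identity, making $w\Ast'$ rigid. One direction of the equivalence is immediate: $w\Ast'$ has only extra relations, so $\wPol(w\Ast')\subseteq\wPol(w\Ast)$ and the preceding theorem gives $\VCSP(w\Ast)\le_p\VCSP(w\Ast')$. For the converse I would simulate a constant constraint ``$x=a$'' inside $\VCSP(w\Ast)$ by attaching to the instance a fresh, heavily weighted copy of $w\Ast$ on new variables $\{z_b:b\in D\}$ and identifying $x$ with $z_a$; the large weight forces any optimal solution to map this gadget by a cost-minimising endomorphism, which one argues is a cost-preserving automorphism $g\in G$. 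Since a single $g$ is applied uniformly, the construction pins each $x$ to $g(a)$; enumerating over the finitely many $g\in G$ and using that each $g$ preserves all costs (so the optima of ``$x=a$'' and ``$x=g(a)$'' agree up to relabelling by $g^{-1}$) recovers the answer, giving $\VCSP(w\Ast')\le_p\VCSP(w\Ast)$.

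The hard part is the weight bookkeeping in both stages, and in particular the claim in the rigidifying stage that a cost-minimising homomorphism of the gadget is a cost-preserving automorphism rather than some cheaper non-bijective endomorphism (the underlying structure may well have non-bijective endomorphisms that fail to be cost-non-increasing). This is exactly where the fractional structure of $w\Ast$, and not merely its underlying relations, must be exploited, and it is the technical heart of the cited result~\cite{Och12:rigidCore}. In the relational case both stages are classical, so the whole content of the lemma lies in verifying that the cores and the symmetry group $G$ respect the cost functions.
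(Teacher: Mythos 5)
Note first that the paper does not actually prove this lemma: it is imported wholesale from \cite{Och12:rigidCore}, so there is no in-paper argument to compare against and your attempt has to be judged on its own. Your two-stage architecture (shrink the domain to a core, then adjoin singleton unary relations to force rigidity) is indeed the standard one, and several pieces are sound: the folding argument for a single cost-non-increasing endomorphism, the observation that a cost-non-increasing bijective polymorphism is automatically cost-preserving, and rigidity via singletons. The first genuine gap is in the coring stage: working only with operations that are \emph{individually} pointwise cost-non-increasing is too weak. The correct notion of core for valued structures requires that every \emph{unary fractional (weighted) polymorphism} --- a distribution over unary operations whose weighted average of costs is pointwise dominated --- be supported on bijections, and your minimal-image argument does not eliminate non-bijective operations that only appear inside such mixtures. (It also does not yield a core in the paper's literal sense, where \emph{all} unary polymorphisms of the underlying relational structure must be bijections; that literal notion cannot be the operative one here, since restricting to the relational core can change the optimum.)

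The second gap is fatal to the proposed gadget and is not merely the ``deferred technical heart'' you flag at the end. Take $D=\{0,1,2\}$ with two everywhere-defined unary weighted relations $\mu$ and $\nu$ given by $\mu(0)=1,\mu(1)=0,\mu(2)=10$ and $\nu(0)=1,\nu(1)=2,\nu(2)=0$. A short computation shows that every unary fractional polymorphism of this structure is supported on the identity alone, so it is a core in the strongest possible sense, and it also passes your stage-1 test. Yet the constant endomorphism onto $1$ has total gadget cost $3\mu(1)+3\nu(1)=6$, strictly less than the identity's cost $14$ (and every bijection costs $14$, since it permutes the value multiset). Hence the cost-minimising endomorphism of your heavily weighted copy of $w\Ast$ is non-bijective, the gadget collapses the intended constants to a single point, and simulating ``$x=2$'' by identifying $x$ with $z_2$ returns optimum $0$ instead of $10$ for the instance minimising $\mu(x)$. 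So the claim that an optimal assignment of the gadget is a cost-preserving automorphism is simply false for cores, and ``exploiting the fractional structure'' cannot rescue this particular construction; the expressibility of constants over a core has to be established by a different argument, which is precisely what \cite{Och12:rigidCore} does via the weighted-polymorphism machinery rather than via a single total-cost-minimisation gadget.
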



\section{Results}

In this section we formally state and prove the main results of this paper. 

\begin{theorem}\label{thm:main}
Let $w\Ast$ be a weighted structure that is a rigid core. There is a balanced digraph $\DG$ which is a rigid core and a finite-valued function $u$ such that  problems $\VCSP(w\Ast)$ and $\MCHom(\DG,u)$ are polynomial-time equivalent.
\end{theorem}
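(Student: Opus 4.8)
The plan is to adapt the Bulin et al.\ reduction from CSP to digraph homomorphism to the valued setting, building the digraph $\DG$ and unary cost function $u$ so that both directions of the equivalence hold. The starting point is the known CSP reduction: given the rigid core weighted structure $w\Ast$, we first forget the weights to obtain the unweighted structure $\Ast$, and apply the Bulin et al.\ construction to produce a balanced digraph $\H$ together with a correspondence between $\Ast$-colourings of an input structure $\Xst$ and $\H$-colourings of an associated digraph. The crucial feature of their construction, which I would rely on heavily, is that it is a \emph{gadget replacement}: each tuple of each relation in $\Xst$ is replaced by a small fixed gadget, and the domain $D$ together with the relation tuples is encoded by distinguished vertices of $\DG$. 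Because the encoding is local and uniform, the homomorphisms $\G\to\DG$ restrict, on the distinguished ``variable'' vertices, exactly to the homomorphisms $\Xst\to\Ast$.

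\textbf{Transferring the weights.} The main new work is to push the cost structure of the VCSP instance through this gadget. First I would observe that each distinguished variable vertex of the instance digraph $\G$ corresponds to a variable $x$ of the original VCSP, and each relation-tuple gadget corresponds to a weighted constraint $\rho^{w\Xst}(\mathbf{x})\cdot\rho^{w\Ast}$. The plan is to carry the constraint weight $\rho^{w\Xst}(\mathbf{x})$ into the $\MCHom$ instance by placing it, via the unary cost function $u$, on an auxiliary vertex inside the corresponding gadget --- a vertex whose forced image under any homomorphism records the tuple $h(\mathbf{x})$ and hence the value $\rho^{w\Ast}(h(\mathbf{x}))$. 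Since $\MCHom(\DG,u)$ only allows unary costs of the form $w\cdot u$ for a single fixed $u$, I must design one $u$ rich enough to simulate, by scaling, every value $\rho^{w\Ast}(\mathbf{a})$ of every weighted relation in $w\Ast$ simultaneously. Because $w\Ast$ is finite, there are finitely many such values, and the rigid-core assumption guarantees that the gadget vertices have forced, distinct images, so a single unary function defined by a suitable table on the distinguished image-vertices of $\DG$ (and $+\infty$ elsewhere) should suffice; the constant factor needed to clear denominators and the additive constants introduced by the gadgets can be absorbed into the weights $w_i$ and subtracted off at the end, since they do not depend on $h$.

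\textbf{The two reductions.} With $\DG$ and $u$ in hand I would prove polynomial-time equivalence in both directions. For $\VCSP(w\Ast)\le\MCHom(\DG,u)$, given an instance $w\Xst$ I build $\G$ by gadget replacement and set the unary weights so that $cost(h)$ for the VCSP equals the total $\MCHom$ cost up to the fixed affine correction; feasibility is inherited from the CSP reduction (a homomorphism $\G\to\DG$ exists iff $\Xst\to\Ast$ exists), and optimality transfers because the cost correspondence is a bijection preserving order. For the reverse direction $\MCHom(\DG,u)\le\VCSP(w\Ast)$, I would add $\DG$ and $u$ themselves into the language: since $\DG$ is a $\{0,\infty\}$-valued binary function and $u$ is finite-valued unary, the structure $(\DG,u)$ is a weighted structure, so by the Cohen et al.\ theorem it suffices to exhibit the inclusion $\wPol(w\Ast)\subseteq\wPol(\DG,u)$, i.e.\ to show that the weighted polymorphisms of $w\Ast$ lift through the gadget construction to weighted polymorphisms of $(\DG,u)$.

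\textbf{The main obstacle.} The hardest part will be this last step: verifying that weighted polymorphisms survive the gadget reduction. In the unweighted case Bulin et al.\ show that polymorphisms of $\Ast$ extend to polymorphisms of $\H$ of the same arity preserving identities; I must check that this extension is compatible with the \emph{weights}, namely that the defining inequality $\sum_{f}\omega(f)\cdot\rho(f(\mathbf{x}_1,\ldots,\mathbf{x}_k))\le 0$ is preserved when $\rho$ ranges over the weighted relations of $(\DG,u)$ --- in particular for the new unary relation induced by $u$. This requires that the extended operations act on the gadget vertices in a way that respects how $u$ encodes the costs $\rho^{w\Ast}$, so the inequality for $(\DG,u)$ reduces, vertex-class by vertex-class, to the inequality already known to hold for $w\Ast$. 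I expect the projection/negative-weight condition and the sum-to-zero condition to transfer immediately (the extended operations are projections exactly when the originals are, and arities are preserved), so the real calculation is confined to the cost inequality on the unary part, which is where the careful design of $u$ in the weight-transfer step pays off.
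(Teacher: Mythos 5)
Your forward direction is essentially the paper's: $\DG$ is the Bulin et al.\ digraph and the costs of the weighted relation are carried by a unary function supported on the top-level vertices of $\DG$ (which are exactly the tuples of $R$), with the constraint weights of the instance becoming the per-vertex coefficients $w$ permitted in $\MCHom(\DG,u)$; the paper packages your ``gadget replacement'' as an expressibility argument (Lemmas~\ref{polyreduclem} and~\ref{expressiblelem2}). One correction there: you cannot set $u=+\infty$ off the distinguished vertices --- the theorem requires $u$ to be \emph{finite-valued} --- and the right choice is $u=0$ outside $R$, which is harmless because only top-level vertices of an input digraph can be mapped into $R$.

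The genuine gap is your reverse direction. You propose to view $(\DG,u)$ as a weighted structure and invoke the Cohen et al.\ theorem from the inclusion $\wPol(w\Ast)\subseteq\wPol(\DG,u)$. That theorem applies only to two weighted structures \emph{on the same domain}: the elements of $\wPol(w\Ast)$ are weightings of operations on $D$, while those of $\wPol(\DG,u)$ are weightings of operations on $V^{\DG}\supsetneq D$, so the inclusion is not even well-typed, and no off-the-shelf Galois-connection statement converts a ``lifting'' of weighted polymorphisms across different domains into the reduction $\MCHom(\DG,u)\le\VCSP(w\Ast)$. (Such a lifting is the content of the paper's separate Theorem~\ref{thm:algebra} on preservation of identities; it is an algebraic preservation result, not a complexity reduction.) What is actually required --- and what constitutes the bulk of the paper's proof, namely Lemma~\ref{reduceDtoAunary} together with Section~\ref{pathSat} --- is an explicit polynomial-time algorithm that takes an \emph{arbitrary} input digraph $\G$ with unary weights (not one arising from your gadget replacement), verifies that it is balanced and of the correct height, solves and discards the short components directly, and then reconstructs from the internal components of $\G$ a weighted instance of $\VCSP(w\Ast')$ whose homomorphisms correspond, cost-preservingly and bijectively, to the homomorphisms $\G\to\DG$. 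Your proposal contains no substitute for this step, so the equivalence claimed in the theorem is not established.
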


\begin{proof}
We can assume without loss of generality that $w\Ast$ contains only one weighted relation, say of arity $n$.
If it contains more, say, $\rho_1,\ldots,\rho_t$ where, for $1\le j\le t$, the arity of $\rho_j$ is $n_j$, than the standard trick is to take the direct product $\rho$ of these relations. Specifically, if $\rho_j:R_j\rightarrow \Q_+$ then $\rho:R_1\times \ldots \times R_t \rightarrow \Q_+$
is such that $\rho(\mathbf{a}_1,\ldots,\mathbf{a}_t)=\rho_1(\mathbf{a}_1)+\ldots+\rho_t(\mathbf{a}_t)$.
It is well known and not hard to see that replacing $\rho_1,\ldots,\rho_t$ with $\rho$ does not change the complexity of $\VCSP(w\Ast)$.

The reduction from $\VCSP(w\Ast)$ to $\MCHom(\DG,u)$ follows from  Lemmas~\ref{polyreduclem} and~\ref{expressiblelem2}, and the reduction in the opposite direction is shown in Lemma~\ref{reduceDtoAunary}.
\end{proof}


\subsection{Constructing $(\DG,u)$}

Firstly we introduce some simple definitions related to digraphs that we will need in the construction of the main theorem of this paper.


\begin{definition}
A digraph $\P$ is an oriented path if it consists of a sequence of vertices $v_0,v_1,...,v_k$ such that precisely one of $(v_{i-1},v_i),(v_i,v_{i-1})$ is an edge, for each $i=1,...,k$. We denote the initial vertex $v_0$ by $\iota\P$ and the terminal vertex $v_k$ by $\tau\P$.
\end{definition}

\begin{definition}
Given a digraph $\G$, any two vertices $a$ and $b$ in $\G$ are connected if there is an oriented path between them. We write $a\xrightarrow{\P} b$ if there exists a homomorphism $\phi:\P\rightarrow\G$ such that $\phi(\iota\P)=a$ and $\phi(\tau\P)=b$.
\end{definition}

A digraph $\G$ is said to be (weakly) connected if every pair of vertices in it are connected. The length of an oriented cycle is defined as being the absolute value of the difference between edges oriented in one direction and edges oriented in the opposite direction. A connected digraph is {\em balanced} if all of its cycles have zero length~\cite{FederVardi98:cs}. The vertices of a balanced digraph can then be organised into levels, that is for every edge $(a,b)$ in the digraph $\G$, $lvl(b)=lvl(a)+1$. The minimum level of $\G$ is 0, and the top level is the height of $\G$. We will sometimes call level 0 vertices {\em base vertices}.


As in~\cite{Bulin13:reduc,Bulin14:reduc}, we say that a {\em zigzag} is the oriented path $\bullet\to\bullet\leftarrow\bullet\to\bullet$ and a {\em single edge} is the path $\bullet\to\bullet$.

Recall that $n$ is the arity of the single weighted relation $\rho$ in $w\Ast$.
For $S\subseteq \{1,2\zd n\}$ define $\Q_{S,l}$ to be a single edge if $l\in S$, and a zigzag if $l\in \{1,2\zd n\}\setminus S$.
As in~\cite{Bulin13:reduc,Bulin14:reduc} we define the oriented path $\Q_S$ (of height $n+2$) by \[\Q_{S}=\bullet\to\bullet\; \dot{+}\; \Q_{S,1} \;\dot{+}\; \Q_{S,2} \;\dot{+}\;\dotsb\;\dot{+}\; \Q_{S,n} \;\dot{+}\;\bullet\to\bullet\]
where $\dot{+}$ denotes the concatenation of paths.

Now take $R$, i.e. the underlying relation of $\rho$, and the domain $D$ of $w\Ast$, and as a starting point consider the digraph with vertices $D\cup R$ and edges $D\times R$. Then replace every edge $(d,\mathbf{a})\in D\times R$ with the path $\Q_{\{i:d=a_i\}}$, and this is the digraph $\DG$. \\
To complete the construction,  we define a unary function $u$. Let $V^{\DG}$ be the vertices of $\DG$, and note that $R\subseteq V^{\DG}$.  Let $u$ to be a unary function from $V^{\DG}$ to $\Q_+$ such that
\[u(v)=\left\{\begin{array}{l l}\rho(v) & \quad\text{if }v\in R\\0 & \quad \text{otherwise.}\end{array}\right.\]
The digraph $\DG$ is identical to the digraph defined in~\cite{Bulin13:reduc,Bulin14:reduc} and due to our definition of $R$ the number of vertices in $\DG$ remains as $(3n+1)|R||D|+(1-2n)|R|+|D|$ and the number of edges as $(3n+2)|R||D|-2n|R|$ as proven in~\cite{Bulin13:reduc,Bulin14:reduc}. Also as noted in~\cite{Bulin13:reduc,Bulin14:reduc} this construction can be performed in polynomial time.

\begin{example}\label{ex1}
Consider the weighted structure $w\Ast$ over the domain $D=\{0,1\}$ with the single weighted relation \[\rho(x,y)=\left\{\begin{array}{l l}2 & \quad\text{if }(x,y)=(0,1)\\1 & \quad \text{if }(x,y)=(1,0)\\\text{undefined} & \quad \text{otherwise.}\end{array}\right.\]
The digraph $\DG$ constructed from $\rho$ is shown in Figure 1. The unary function built from $\rho$ is \[u(v)=\left\{\begin{array}{l l}2 & \quad\text{if }v=(0,1)\\1 & \quad \text{if }v=(1,0)\\0 & \quad \text{otherwise}\end{array}\right.\] for every vertex $v\in V^{\DG}$.
\end{example}

\begin{figure}\label{digraphEx1}
\begin{center}
\begin{tikzpicture}

\fill (canvas cs:x=2cm,y=5cm) circle (2pt);
\fill (canvas cs:x=11cm,y=5cm) circle (2pt);
\fill (canvas cs:x=5cm,y=1cm) circle (2pt);
\fill (canvas cs:x=8cm,y=1cm) circle (2pt);

\draw (canvas cs:x=1cm,y=4cm) circle (2pt);
\draw (canvas cs:x=3cm,y=4cm) circle (2pt);
\draw (canvas cs:x=5cm,y=4cm) circle (2pt);
\draw (canvas cs:x=8cm,y=4cm) circle (2pt);
\draw (canvas cs:x=10cm,y=4cm) circle (2pt);
\draw (canvas cs:x=12cm,y=4cm) circle (2pt);

\draw (canvas cs:x=1cm,y=3cm) circle (2pt);
\draw (canvas cs:x=3cm,y=3cm) circle (2pt);
\draw (canvas cs:x=4cm,y=3cm) circle (2pt);
\draw (canvas cs:x=6cm,y=3cm) circle (2pt);
\draw (canvas cs:x=7cm,y=3cm) circle (2pt);
\draw (canvas cs:x=9cm,y=3cm) circle (2pt);
\draw (canvas cs:x=10cm,y=3cm) circle (2pt);
\draw (canvas cs:x=12cm,y=3cm) circle (2pt);

\draw (canvas cs:x=2cm,y=2cm) circle (2pt);
\draw (canvas cs:x=4cm,y=2cm) circle (2pt);
\draw (canvas cs:x=6cm,y=2cm) circle (2pt);
\draw (canvas cs:x=7cm,y=2cm) circle (2pt);
\draw (canvas cs:x=9cm,y=2cm) circle (2pt);
\draw (canvas cs:x=11cm,y=2cm) circle (2pt);

\node [below] at (5,0.95) {$0$};
\node [below] at (8,0.95) {$1$};
\node [above] at (2,5) {$(1,0)$};
\node [above] at (11,5) {$(0,1)$};

\draw [->,thick](1.1,4.1) -- (1.9,4.9);
\draw [->,thick](1,3.15) -- (1,3.85);
\draw [->,thick](1.9,2.1) -- (1.1,2.9);
\draw [->,thick](2.1,2.1) -- (2.9,2.9);
\draw [->,thick](3.9,2.1) -- (3.1,2.9);
\draw [->,thick](4.9,1.1) -- (4.1,1.9);

\draw [->,thick](10.1,4.1) -- (10.9,4.9);
\draw [->,thick](9.1,3.1) -- (9.9,3.9);
\draw [->,thick](8.9,3.1) -- (8.1,3.9);
\draw [->,thick](7.1,3.1) -- (7.9,3.9);
\draw [->,thick](6.1,2.1) -- (6.9,2.9);
\draw [->,thick](5.1,1.1) -- (5.9,1.9);

\draw [->,thick](2.9,4.1) -- (2.1,4.9);
\draw [->,thick](3.9,3.1) -- (3.1,3.9);
\draw [->,thick](4.1,3.1) -- (4.9,3.9);
\draw [->,thick](5.9,3.1) -- (5.1,3.9);
\draw [->,thick](6.9,2.1) -- (6.1,2.9);
\draw [->,thick](7.9,1.1) -- (7.1,1.9);

\draw [->,thick](11.9,4.1) -- (11.1,4.9);
\draw [->,thick](12,3.15) -- (12,3.85);
\draw [->,thick](11.1,2.1) -- (11.9,2.9);
\draw [->,thick](10.9,2.1) -- (10.1,2.9);
\draw [->,thick](9.1,2.1) -- (9.9,2.9);
\draw [->,thick](8.1,1.1) -- (8.9,1.9);

\end{tikzpicture}

\caption{The digraph $\DG$ built from the weighted structure $w\Ast$.}
\end{center}
\end{figure}

Lemma 4.1 of~\cite{Bulin14:reduc} states that the unary polymorphisms of relation $R$ and of digraph $\DG$ are in one-to-one correspondence.
Hence, we immediately get the following.

\begin{lemma} $w\Ast$ is a rigid core if and only $\DG$ is a rigid core.
\end{lemma}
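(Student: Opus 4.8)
The plan is to leverage the cited Lemma 4.1 of~\cite{Bulin14:reduc}, which asserts a one-to-one correspondence between the unary polymorphisms of $R$ and those of $\DG$. The statement to prove reduces the ``rigid core'' property of a weighted structure to that of its associated digraph, and by Definition~\ref{rigidCore} this property concerns only the unary polymorphisms: a weighted structure (or digraph, viewed as a structure) is a rigid core precisely when its \emph{only} unary polymorphism is the identity map. So the entire task is to translate the correspondence of unary polymorphisms into a correspondence of the ``only unary polymorphism is the identity'' condition.

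First I would observe that, by Definition~\ref{polymorphism}, the unary polymorphisms of the weighted structure $w\Ast$ coincide with the unary polymorphisms of its single underlying relation $R$ (weights are irrelevant for polymorphisms, which are defined purely in terms of the relation $R$, and here we have reduced to one relation as in the proof of Theorem~\ref{thm:main}). Likewise, the unary polymorphisms of the digraph $\DG$ are the unary polymorphisms of its edge relation $E^{\DG}$. Hence I can restate the goal entirely in terms of the relational structures underlying $w\Ast$ and $\DG$, which is exactly the setting in which Lemma 4.1 of~\cite{Bulin14:reduc} operates.

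Next I would invoke that cited lemma: the unary polymorphisms of $R$ and of $\DG$ are in bijective correspondence. The key point I need is that this bijection sends the identity on $D$ to the identity on $V^{\DG}$ (and conversely), which is how the construction is designed — a unary polymorphism of $R$ extends canonically to one of $\DG$, and the trivial polymorphism corresponds to the trivial one. Therefore $R$ has a unary polymorphism other than the identity if and only if $\DG$ has one other than the identity. Negating both sides, $R$ (equivalently $w\Ast$) has the identity as its \emph{sole} unary polymorphism if and only if $\DG$ does, which is precisely the rigid-core condition on each side.

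The main obstacle, such as it is, is not a genuine difficulty but rather a matter of care: I must confirm that the bijection of Lemma 4.1 respects identities in the sense above, so that the rigid-core (as opposed to merely core) property transfers exactly. Since the correspondence is explicit in~\cite{Bulin13:reduc,Bulin14:reduc} and the identity is always a polymorphism on both sides, this is immediate once the bijection is in hand; the whole argument is then a one-line logical equivalence, which is why the paper records it as an immediate consequence.
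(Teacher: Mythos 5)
Your proposal is correct and follows exactly the paper's own (one-line) argument: cite the one-to-one correspondence between unary polymorphisms of $R$ and of $\DG$ from Lemma~4.1 of Bulin et al.\ and observe that the rigid-core condition transfers because the correspondence matches identity to identity. The extra care you take in checking that weights are irrelevant to polymorphisms and that the bijection respects the identity map is sound and only makes explicit what the paper leaves implicit.
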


\subsection{Reduction from $\VCSP(w\Ast)$ to $\MCHom(\DG,u)$}

Given an instance of $\VCSP(w\Ast_0)$, where $w\Ast_0$ is some weighted structure, the variables in any constraint in that instance are explicitly constrained. It is also possible that any subset of variables in that instance are implicitly constrained due to combinations of constraints. The weighted relation describing this implicit constraint may not belong to $w\Ast_0$, but is said to be expressible by $w\Ast_0$.

\begin{definition}
Given an instance $w\Xst$ of $\VCSP(w\Ast_0)$ with domain $X=\{x_1,...,x_n\}$, and a tuple of distinct elements $W=(x_1 \zd x_t)$, where $t\leq n$, define weighted relation $\rho_{w\Xst}^{W}$ as follows:
\begin{equation}\label{expressEq}
\rho_{w\Xst}^{W}(a_1\zd a_t)=\min_{h:\Xst\rightarrow \Ast,\ h(x_1,\ldots,x_t)=(a_1,\ldots,a_t)}{cost(h)}.
\end{equation}
Note that if, for some $(a_1\zd a_t)$, there is no homomorphism $h:\Xst\rightarrow \Ast$ such that $h(x_1,\ldots,x_t)=(a_1,\ldots,a_t)$ then
$\rho_{w\Xst}^{W}(a_1\zd a_t)$ is undefined.
A weighted relation $\rho$ is {\em expressible} by $w\Ast_0$ if there is an instance $w\Xst$ of $\VCSP(w\Ast_0)$ with domain $X$, and $W\subseteq X$, such that $\rho=\rho_{w\Xst}^{W}$.
\end{definition}

\begin{lemma}[\cite{Cohen06:soft}]\label{polyreduclem}
Let $w\Ast$ and $w\Ast_0$ be weighted structures. If every weighted relation $\rho$ in $w\Ast$ is expressible by $w\Ast_0$, then $\VCSP(w\Ast)$ is polynomial-time reducible to $\VCSP(w\Ast_0)$.
\end{lemma}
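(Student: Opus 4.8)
The plan is to use the standard \emph{gadget substitution} technique. Since every weighted relation of $w\Ast$ is expressible by $w\Ast_0$, each constraint of an input instance can be replaced by a private copy of a fixed $\VCSP(w\Ast_0)$-gadget witnessing that expressibility; the task is then to show that this purely local replacement preserves both feasibility and optimal cost exactly, and can be carried out in polynomial time.

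First I would fix the gadgets. By hypothesis, for each weighted relation $\rho$ of $w\Ast$, say with underlying relation $R$ of arity $t$, there is an instance $w\Xst_\rho$ of $\VCSP(w\Ast_0)$ and a tuple $W_\rho=(x_1\zd x_t)$ of distinct variables with $\rho=\rho_{w\Xst_\rho}^{W_\rho}$, i.e.
\[\rho(a_1\zd a_t)=\min_{h:\Xst_\rho\to\Ast_0,\ h(x_1\zd x_t)=(a_1\zd a_t)}cost(h)\]
for all $(a_1\zd a_t)$, the minimum being undefined exactly when no extending homomorphism exists. As $w\Ast$ and $w\Ast_0$ are fixed finite structures, there are finitely many such $\rho$, and each gadget $w\Xst_\rho$ has constant size.

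Next, given an input instance $w\Xst$ of $\VCSP(w\Ast)$, I would build an instance $w\Xst'$ of $\VCSP(w\Ast_0)$ as follows. Its variables are those of $w\Xst$ together with, for each constraint of $w\Xst$, a fresh, pairwise-disjoint copy of the non-designated (\emph{auxiliary}) variables of the relevant gadget. For each constraint of $w\Xst$, i.e.\ each tuple $\mathbf{y}\in R^{\Xst}$ with weight $\beta=\rho^{w\Xst}(\mathbf{y})$, I insert a fresh copy of $w\Xst_\rho$, identifying its designated tuple $W_\rho$ with $\mathbf{y}$ and multiplying every internal weight of this copy by $\beta$. All weights remain non-negative rational, each gadget is of constant size, and the number of copies equals the number of constraints of $w\Xst$, so $w\Xst'$ has size linear in that of $w\Xst$ and is computable in polynomial time.

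The heart of the argument, which I expect to be the main obstacle, is the cost identity. Fix any assignment $g$ to the original variables. Because the auxiliary variables of distinct gadget copies are pairwise disjoint, minimising the cost function of $w\Xst'$ over all auxiliary variables splits into independent per-gadget minimisations; by the choice of $w\Xst_\rho$ and the scaling by $\beta$, the copy attached to a constraint $\mathbf{y}$ contributes exactly $\beta\cdot\rho(g(\mathbf{y}))=\rho^{w\Xst}(\mathbf{y})\cdot\rho(g(\mathbf{y}))$, and is infeasible precisely when $\rho(g(\mathbf{y}))$ is undefined. Summing over all constraints shows that the minimum of the cost of $w\Xst'$ over extensions of $g$ equals the cost of $g$ as a solution of $w\Xst$. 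Hence $w\Xst$ is feasible iff $w\Xst'$ is, their optimal values coincide, and restricting any optimal solution of $w\Xst'$ to the original variables yields an optimal solution of $w\Xst$; conversely an optimal $g$ extends to an optimal solution of $w\Xst'$ by choosing an optimal auxiliary assignment in each gadget. The delicate points to verify are that the weight scaling together with the disjointness of auxiliaries really produces this additive decomposition into per-constraint terms $\rho^{w\Xst}(\mathbf{y})\cdot\rho(g(\mathbf{y}))$, and that the convention ``undefined $=$ no extending homomorphism'' transports infeasibility correctly between the two instances.
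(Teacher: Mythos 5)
Your proposal is correct: the paper does not prove this lemma at all but imports it from the cited reference \cite{Cohen06:soft}, and your gadget-substitution argument (private scaled copies of the expressing instances, with the additive decomposition of the minimisation over disjoint auxiliary variables) is precisely the standard proof given there. The only points you gloss over are routine: constraints whose scope repeats a variable force an identification of designated gadget variables (which expresses the corresponding diagonal of $\rho$, as needed), and weight-$0$ constraints still enforce feasibility via the $0$-scaled gadget.
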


Let $w\Ast_0$ be the weighted relational structure containing only weighted relations $E^\DG$ and $u$, where $E^\DG$ simply assigns 0 to every edge of $\DG$.

\begin{lemma}\label{expressiblelem2}
$w\Ast$ is expressible by $w\Ast_0$.

\end{lemma}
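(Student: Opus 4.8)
The goal is to show that the single weighted relation $\rho$ (of arity $n$) in $w\Ast$ is expressible by $w\Ast_0$, where $w\Ast_0$ consists of the zero-weighted edge relation $E^\DG$ and the unary cost function $u$. By the definition of expressibility, I need to exhibit an instance $w\Xst$ of $\VCSP(w\Ast_0)$, together with a tuple $W$ of distinguished variables, such that $\rho_{w\Xst}^{W} = \rho$. The natural candidate for $W$ is a tuple of $n$ base vertices (level-0 vertices) of a suitable gadget, so that assignments to $W$ correspond to $n$-tuples $\mathbf{a}\in D^n$, and I must check both that the feasible assignments to $W$ are exactly the tuples in $R = \mathit{Feas}(\rho)$, and that the minimum cost of completing such an assignment equals $\rho(\mathbf{a})$.

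\textbf{Key steps.} First I would recall the crucial combinatorial fact established in~\cite{Bulin13:reduc,Bulin14:reduc}: the paths $\Q_S$ and the digraph $\DG$ are engineered so that homomorphisms from $\DG$ into a structure correspond to polymorphisms of $R$, and dually, the base vertices of $\DG$ ``simulate'' tuples of $R$. Concretely, I would use the gadget instance $w\Xst$ whose underlying unweighted structure is $\DG$ itself (with its edge relation constrained crisply by $E^\DG$), and assign the unary cost $u$ to each vertex. The distinguished tuple $W$ should be the $n$ base vertices that, under any homomorphism $h:\DG\to\Ast$ (here $\Ast$ being the unweighted digraph underlying $w\Ast_0$, i.e. $\DG$-colourings), get forced to represent a single tuple of $R$. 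The path construction $\Q_{\{i:d=a_i\}}$ ensures, via the zigzag-versus-single-edge encoding, that a feasible homomorphism must send the $R$-vertex of a gadget to some $\mathbf a\in R$ and that the level-0 endpoints encode exactly the coordinates $a_i$; this is precisely the content of the correspondence lemma from~\cite{Bulin14:reduc}. Second, since $u$ is supported on $R\subseteq V^\DG$ and assigns $\rho(v)$ there and $0$ elsewhere, the cost of any homomorphism $h$ that routes the gadget through the vertex $\mathbf a\in R$ is exactly $\rho(\mathbf a)$, with all other vertices contributing $0$. Taking the minimum over homomorphisms fixing $W$ to the tuple encoding $\mathbf a$ therefore yields $\rho(\mathbf a)$, and infeasibility of $\mathbf a\notin R$ matches $\rho$ being undefined there. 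This establishes $\rho_{w\Xst}^{W} = \rho$.

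\textbf{Main obstacle.} The delicate part is not the cost bookkeeping—which is essentially immediate once $u$ is set up as above—but verifying that the \emph{feasibility} side of the expressibility is exactly right: that the homomorphisms $h:\DG\to\DG$ (the unweighted target) which fix the base-vertex tuple $W$ to a given $\mathbf a$ are in bijection with (or at least minimised correctly by) the way $\mathbf a$ sits in $R$, with no spurious homomorphisms that would make an infeasible $\mathbf a$ look feasible or that would undercut the intended cost. This is exactly where the structural properties of the $\Q_S$-paths and the level structure of the balanced digraph $\DG$ do the work, and it is inherited wholesale from the homomorphism analysis in~\cite{Bulin13:reduc,Bulin14:reduc}. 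I would therefore spend the bulk of the argument quoting and adapting their statement that the combinatorics of $\DG$ faithfully encodes membership in $R$, and then checking only the new ingredient: that attaching the unary weights $u$ turns this faithful feasibility encoding into a faithful \emph{cost} encoding, so that the minimisation in~\eqref{expressEq} reproduces $\rho$ on the nose.
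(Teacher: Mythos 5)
There is a genuine gap: the concrete instance you propose does not express $\rho$. You take $w\Xst$ to be $\DG$ itself, attach the unary cost $u$ to \emph{every} vertex, and let $W$ be ``the $n$ base vertices'' of $\DG$. This fails on three counts. First, $\DG$ has $|D|$ base vertices, not $n$, and no choice of $n$ of them singles out a tuple of $R$: in $\DG$ every top-level vertex $\mathbf{a}\in R$ is joined by a path to \emph{every} base vertex $d\in D$ (the coordinates of $\mathbf{a}$ are encoded by which of these paths have a single edge in position $i$, not by adjacency to particular base vertices), so there is no bijection between assignments to base vertices and tuples of $R$ of the kind your plan needs. Second, $\DG$ is a rigid core, so the only homomorphism from $\DG$ to itself is the identity; hence $\rho_{w\Xst}^{W}$ would be defined on exactly one tuple, which cannot equal $\rho$ unless $|R|=1$. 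Third, even ignoring rigidity, placing $u$ on every vertex of the gadget forces every top-level vertex of the gadget to contribute the $\rho$-value of its image, so any feasible assignment would incur cost $\sum_{\mathbf{a}\in R}\rho(\mathbf{a})$ rather than the single value $\rho(a_1,\ldots,a_n)$ that the minimisation in~(\ref{expressEq}) is supposed to return. You correctly identify the zigzag-versus-single-edge mechanism and the role of $u$ as the only new cost ingredient, but the instance you write down does not implement either.

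The paper's gadget is much smaller and is tailored to exactly these issues: take the $n$ paths $\Q_{\{1\}},\ldots,\Q_{\{n\}}$ amalgamated at their terminal vertices, put the unary weight $u$ (with coefficient $1$) \emph{only} on the common terminal vertex $y$, and let $W=(x_1,\ldots,x_n)$ be the $n$ initial vertices. Since this fan has the full height $n+2$ of the balanced digraph $\DG$, any homomorphism must send $y$ to a top-level vertex, i.e.\ to some $\mathbf{a}\in R$, and $\Q_{\{i\}}$ maps onto the path of $\DG$ from $d$ to $\mathbf{a}$ (which is $\Q_{\{j:d=a_j\}}$) exactly when $d=a_i$; hence $h(x_i)=a_i$ and the total cost is $u(\mathbf{a})=\rho(\mathbf{a})$, giving $\rho_{w\Xst}^{W}=\rho$. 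You also omit the first half of the paper's proof, namely expressing the domain $D$ itself as a crisp unary relation via the single path $\Q_{\emptyset}$ with $W$ its initial vertex; this is needed so that variables of a $\VCSP(w\Ast)$ instance can be confined to base vertices when translated into $\VCSP(w\Ast_0)$.
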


\begin{proof}
All we need to see is that the domain $D$ and weighted relation $\rho$ of $w\Ast$ can be expressed by $w\Ast_0$, in the sense of Equation~(\ref{expressEq}).
Consider the weighted structure $w\Xst$ that contains the binary 0-weighted relation corresponding to the oriented path $\Q_{\emptyset}$ (defined above) and the empty unary relation. Let $x$ be the initial vertex of $\Q_{\emptyset}$ and let $W=\{x\}$. Then $D$, as a unary $0$-weighted relation, can be defined
as $\rho_{w\Xst}^{W}$. Indeed, the homomorphisms from $\Xst$ to $\Ast$ are simply homomorphisms from $\Q_{\emptyset}$ to $\DG$, and it is clear that
the images of $x$ under such homomorphisms are exactly the base elements of $\DG$, i.e., precisely the elements of $D$.

To express $\rho$, consider the weighted structure $w\Xst$ whose binary $0$-weighted relation corresponds to the digraph obtained by identifying the terminal vertices of $n$ directed paths $\Q_{\{1\}},\ldots,\Q_{\{n\}}$, and whose unary relation contains a single element $y$, the common vertex of the paths $\Q_{\{1\}},\ldots,\Q_{\{n\}}$, with weight 1. Let $W=\{x_1,\ldots,x_n\}$ where $x_i$ is the initial element of $\Q_{\{i\}}$.
Again, it is not hard to see from the definitions of $\DG$ and $u$ that this $\rho_{w\Xst}^{W}$ is precisely the required weighted relation $\rho$.
\end{proof}

\subsection{Reduction from $\MCHom(\DG,u)$ to $\VCSP(w\Ast)$}


Let $w\Ast'$ be the structure obtained from $w\Ast$ by adding the weighted relation $\rho_0$, which is obtained from $\rho$ by making every weight 0.
It is shown in~\cite{Cohen12:weightedPol} that $\VCSP(w\Ast)$ and $\VCSP(w\Ast')$ are polynomial-time equivalent.

\begin{lemma}\label{reduceDtoAunary}
$\MCHom(\DG,u)$ reduces to $\VCSP(w\Ast')$ in polynomial time.
\end{lemma}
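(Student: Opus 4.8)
The goal is to reduce an instance of $\MCHom(\DG,u)$ to an instance of $\VCSP(w\Ast')$ in polynomial time. The plan is to take an arbitrary input digraph $\G$ together with unary cost functions and translate it, piece by piece, into a weighted $\tau$-structure over $w\Ast'$. The key structural fact I would exploit is the same one underlying the $\CSP$ reduction of Bulin \emph{et al.}: because $\DG$ was built by replacing each edge $(d,\mathbf{a})\in D\times R$ with the oriented path $\Q_{\{i:d=a_i\}}$, homomorphisms from $\G$ to $\DG$ are tightly controlled by the level structure of the balanced digraph $\DG$. In particular, every vertex of $\G$ that can map to $\DG$ at all must map to a vertex at a determined level, and the base vertices of $\DG$ are exactly the elements of $D$, while the top-level vertices (one level above, separated by the path structure) correspond to tuples in $R$. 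I would first argue that without loss of generality the input $\G$ is balanced and connected (standard preprocessing, as in~\cite{Bulin13:reduc,Bulin14:reduc}), since otherwise no homomorphism exists or the components can be handled independently.

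The heart of the reduction is to recover, from a homomorphism $h:\G\to\DG$, an assignment of the base-level and top-level vertices of $\G$ to $D$ and $R$ respectively, and to show that the constraints imposed by the $\Q_S$-paths are precisely captured by the underlying relation $R$ of $\rho_0$. Concretely, I would introduce in the $\VCSP(w\Ast')$ instance a variable for each vertex of $\G$ that is forced to sit at a base level, and whenever a collection of base vertices is connected through the path gadgets to a common top-level vertex in the pattern of $\Q_{\{1\}},\ldots,\Q_{\{n\}}$, I would post the crisp constraint $\rho_0$ on them. This uses exactly the expressibility computation from the proof of Lemma~\ref{expressiblelem2}, run in reverse: the path gadgets in $\DG$ encode membership in $R$, so feasibility of $h$ corresponds to satisfaction of these crisp $\rho_0$-constraints. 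The reason $\rho_0$ (the $0$-weighted copy of $\rho$) is the right relation to use, rather than $\rho$ itself, is that the edges of $\DG$ contribute no cost under $E^\DG$; only the unary function $u$ carries weight, so the feasibility must be crisp and the cost must be relocated to unary terms.

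The remaining task is to transfer the cost. The unary function $u$ is nonzero only on vertices of $\DG$ lying in $R$, where it equals $\rho$; a given unary cost instance of $\MCHom(\DG,u)$ assigns to each vertex $x$ of $\G$ a function $w\cdot u$, and I would route this cost to the corresponding $\VCSP(w\Ast')$ variable. For a $\G$-vertex constrained to land at the top level of a path bundle, mapping it to a tuple $\mathbf{a}\in R$ should incur exactly $w\cdot\rho(\mathbf{a})$, which is what a weighted application of $\rho$ in $w\Ast'$ delivers; for base-level vertices $u$ is $0$ and nothing is added. Thus I would add, for each relevant vertex, a weighted $\rho$-constraint on the tuple of base variables feeding it, with weight inherited from the $\MCHom$ instance, and verify that $cost(h)$ in $\MCHom(\DG,u)$ equals the $\VCSP(w\Ast')$ cost of the corresponding assignment, and conversely. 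The main obstacle I anticipate is bookkeeping the correspondence rigorously: showing that \emph{every} homomorphism $\G\to\DG$ factors through the intended base/top-level assignment (so that no spurious low-cost solutions exist), and that partial assignments cannot exploit the internal vertices of the $\Q_S$ gadgets to cheat on either feasibility or cost. This is precisely where I would lean most heavily on the detailed path-analysis lemmas of~\cite{Bulin13:reduc,Bulin14:reduc}, which pin down the image of each oriented path and guarantee the level structure is rigid enough to force the intended decoding.
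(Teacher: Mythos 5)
Your overall plan is the same as the paper's: adapt the Bulin \emph{et al.}\ construction, use the $0$-weighted copy $\rho_0$ to encode feasibility of the path gadgets, and attach weighted $\rho$-constraints to the tuples associated with cost-bearing top-level vertices so that corresponding solutions have equal cost. However, there is one genuine gap. You dispose of disconnectedness by saying the components ``can be handled independently,'' but components of $\G$ whose height is strictly less than the height $m$ of $\DG$ cannot be encoded by the tuple construction at all: that construction works only for full-height components, because it identifies top-level vertices of a component with tuples in $R$ and base vertices with elements of $D$. In the plain CSP reduction such short components are just a feasibility check that can be discarded; here they are not, because a short component can still map some of its vertices onto top-level vertices of $\DG$, which lie in $R$ where $u$ is nonzero, so the component contributes to the objective value. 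The paper therefore needs a separate polynomial-time \emph{optimisation} subroutine for short components (Lemma~\ref{pathSatUnaryVCSP}, solving $\MCHom(\Fan,u|_{\Fan})$ on fans of paths $\Q_S$, building on Lemma~\ref{pathSatCSP}), and this is one of the places where the valued setting genuinely differs from the CSP reduction you are importing. Your proposal as written would either silently drop this cost or have no way to encode it.

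A secondary, lesser issue: your decoding step speaks of base vertices ``connected through the path gadgets'' to a common top vertex, but the input $\G$ is an arbitrary balanced digraph, not an image of the gadgets. The actual construction must classify each internal component $C$ of $\G$ by the minimal set $\Gamma(C)$ for which $C$ is satisfiable in $\Q_{\Gamma(C)}$, introduce fresh variables when $C$ touches no base vertex, and merge variables via an equality graph; it must also deal with distinct weighted top vertices of $\G$ that are forced onto the same tuple (their weights must be summed in the weighted relation). You acknowledge deferring to the path-analysis lemmas of Bulin \emph{et al.}\ for this, which is acceptable for the feasibility part, but the short-component issue above is not covered by those lemmas and needs its own argument.
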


\begin{proof}
Let $(\G,W)$ be an instance of $\MCHom(\DG,u)$, where $\G$ is a digraph and $W$ is a unary weighted relation over $V^{\G}$ responsible for the optimisation aspect of the problem. Formally (to match Definition~\ref{def:MCHom}), we can assume that the vertices of $\G$ outside $W$ have function $0\cdot u$ applied to them.

Our reduction is a modification of the reduction in~\cite{Bulin13:reduc,Bulin14:reduc}. Specifically, Stages 1 and 3a of the reduction are exactly the same, but Stages 2 and 3b are modified.
\\\\
Stage 1: Verify $\G$ is balanced and test height.

This initial check is to verify that that $\G$ is balanced and that $\G$ has height not greater than $m$, the height of $\DG$. If either of these conditions fail there can be no homomorphism from $\G$ to $\DG$ and we return the fixed NO instance of $\VCSP(w\Ast'$). It is easy to see that these checks can be performed in polynomial time.

 It is clear that if $\G$ is balanced and has the right height then, under any homomorphism from $\G$ to $\DG$, only
vertices of top level in $\G$ can be mapped to the vertices of top level in $\DG$. Therefore, any vertex in $W$ that is not from the top level
cannot possibly affect the cost of a homomorphism, and so can be safely removed from $W$. From now we will assume that $W$, if non-empty, contains only top level vertices of $\G$.
\\\\
Stage 2: Elimination of short components.

If $\G$ contains a connected component $\H$ of height less than $m$ (a short component) then we can find an optimal solution for this component directly in polynomial time as in~\cite{Bulin13:reduc,Bulin14:reduc}, the proof of which is in Section~\ref{pathSat}. We repeat this procedure for every short component and if there is any component $\H$ for which there is no solution then we return the fixed NO instance of $\VCSP(w\Ast')$. Provided there is an optimal solution to all short components we can ignore these components for the remainder of the reduction.
If $\G$ itself is of height less than $m$ then finding an optimal solution to every connected component completes the reduction and we output some fixed YES instance of $\VCSP(w\Ast')$.
\\\\
Stage 3a: Construction of $\B'$.

This stage is identical to Stage 3A of~\cite{Bulin13:reduc,Bulin14:reduc} and is included here only for completeness. In this stage we build the ``object'' $\B'$, which consists of a list of tuples, some of them with subscripts, and a list of equalities. These tuples contain sets of vertices of $\DG$ and new vertices created during the algorithm. We will also need the following technical detail from~\cite{Bulin13:reduc,Bulin14:reduc}.

Say that a digraph $\H$ is {\em satisfiable in a digraph} $\H'$ if there is a homomorphism from $\H$ to $\H'$.
It is shown in~\cite{Bulin13:reduc,Bulin14:reduc} that, for any connected balanced digraph $\H$, there is a smallest set $S\subseteq\{1,...,n\}$ such that $\H$ is satisfiable in $\Q_S$, and this set can be efficiently found. We denote this set by $\Gamma(\H)$\footnote{Notation $\Gamma(\H)$ is from~\cite{Bulin13:reduc,Bulin14:reduc}, not related to valued constraint languages $\Gamma$.}.

Any new vertices created in the algorithm to construct $\B'$ should be unique, and we remind the reader that the height of $\G$ is $m$ and the arity of the weighted relation $\rho$ in $w\Ast$ is $n$.
Say that the {\em internal components} of $\G$ are the connected components of the induced subgraph of $\G$ obtained by removing all vertices of height 0 and $m$.
The algorithm is as follows:
\\\\
Firstly we create tuples from the top level vertices of $\G$. For each such vertex $e$, there will be one tuple with subscript $e$.\\
For every vertex $e$ in $\G$ of height $m$ and for $i=1$ to $n$, do the following:
\begin{enumerate}
\item Identify all internal components $C$ of $\G$ such that $i\in\Gamma(C)$ and $C$ has an edge to $e$.
\item If there are no such components, add a new vertex $x$ to the $i$th component (vertex set) of the output tuple for $e$.
\item Else, for each such internal component $C$, do
\begin{enumerate}
	\item If $C$ has edges to vertices of level 0 in $\G$, say $b_1\zd b_j$, then add them to the $i$th vertex set of the output tuple for $e$.
\item Otherwise add a new vertex $x$ to the $i$th vertex set of the output tuple for $e$.
\end{enumerate}
\end{enumerate}

\noindent This completes the first part of the algorithm for constructing $\B'$ and now we create tuples (that will not have subscripts) using the base vertices of $\G$.

For every level 0 vertex $b$ and for $i=1$ to $n$ do the following:
\begin{enumerate}
\item Identify all internal components $C$ of $\G$ such that $C$ has an edge to $b$, but no edge to any vertex of height $m$.
\item For each such internal component, if they exist, do
\begin{enumerate}
	\item if $i\in \Gamma(C)$ then add $b$ to the $i$th vertex set of the output tuple for $b$.
	\item if $i\notin \Gamma(C)$ then add a new vertex $x$ in the $i$th vertex set of the output tuple for $b$.
\end{enumerate}
\end{enumerate}

\noindent The algorithm is completed by creating a list of equalities, $L$, showing (some of the) vertices that will have to be mapped to the same vertex by any homomorphism from $\G$ to $\DG$. The rules for creating $L$ are as follows.
\begin{enumerate}[i.]
	\item If there is an internal component with edges to distinct vertices $e$ and $f$ of height $m$ in $\G$, then we write $e=f$.
	\item If there is an internal component with edges to distinct vertices $b$ and $c$ of height 0 in $\G$, then we write $b=c$.
\end{enumerate}

\noindent Stage 3b: Construction of $w\Xst'$.

In a modification to the construction of structure $\B$ in the proof of~\cite{Bulin13:reduc,Bulin14:reduc}, we construct weighted structure
$w\Xst'$ containing two weighted relations $\rho'_w$ and $\rho'_0$. The relation $\rho'_0$ will be the 0-weighted relation identical to the (unique) relation in structure $\B$ from~\cite{Bulin13:reduc,Bulin14:reduc}, whilst $\rho'_w$ is built using $W$, and controls the optimisation aspect of the reduction.

We start with building the equality graph. Its vertices are the base vertices of $\G$ and the new vertices created in stage 3a of the algorithm. The rules to create edges in the equality graph are taken directly from~\cite{Bulin13:reduc,Bulin14:reduc}.
\\\\
The three rules for creating an edge in the equality graph are:
\begin{enumerate}[i.]
	\item Add an edge from vertex $a$ to vertex $b$ if $a$ and $b$ lie in the same vertex set in $\B'$.
	\item Add an edge if $a=b$ is an equality in $L$.
	\item Add an edge if $a$ and $b$ appear in the $i$th vertex set of two tuples with subscripts $e$ and $f$ where $e=f$ is an equality in $L$.
\end{enumerate}

\noindent
Each element in the domain $X'$ of $w\Xst'$ will be the set of vertices of a connected component of the equality graph.
It follows from the construction that all vertices from the same connected component have to be mapped to the same
element under any homomorphism from $\G$ to $\DG$.

To obtain the tuples of $\rho'_0$ we replace the vertex set in every coordinate of every tuple of $\B'$ with the set of vertices of the connected component containing that vertex set in the equality graph, and remove all of the book-keeping subscripts. Each tuple in the weighted relation $\rho'_0$ is assigned weight 0.

The weighted relation $\rho'_w$ will be defined on (some of the) tuples on which $\rho'_0$ is defined.
Assume that $\rho'_0(A_1,\ldots,A_n)$ is defined. Then $\rho'_w(A_1,\ldots,A_n)$ is defined
if and only if there is $e\in W$  such that the tuple $(B_1,\ldots,B_n)\in \B'$ with subscript $e$ satisfies $B_i\subseteq A_i$ for all $1\le i\le n$. In this case, $\rho'_w(A_1,\ldots,A_n)$ is defined to be the sum of values $w(e)$ over all such $e$.

It is stated in~\cite{Bulin13:reduc,Bulin14:reduc} that homomorphisms from $\G$ to $\DG$ are in one-to-one correspondence with homomorphisms from $\Xst'$ to $\Ast'$ which we elaborate here.
Let $h:\G\rightarrow\DG$, we want to show there is a corresponding homomorphism $h':\Xst'\rightarrow\Ast'$. By construction $\Xst'$ consists of tuples whose elements are vertex sets, where each vertex set can contain bottom vertices of $\G$ and any new vertices created in stage 3a of the algorithm above. Consider all vertex sets of all tuples of $\Xst'$. No two of these vertex sets have any vertices in common, unless the vertex sets are identical, as they would have been identified and grouped together by the equality graph in stage 3b. If a vertex set consists of only new vertices then that vertex set appears only once in one tuple of $\Xst'$ (note that we ignore repeated tuples in $\Xst'$). It is noted in~\cite{Bulin13:reduc,Bulin14:reduc} that any tuple in $\Xst'$ that consists of only new vertices can be ignored. We note that such a tuple can be mapped anywhere by a homomorphism $h$, and therefore the homomorphisms $h$ and $h'$ are only in one-to-one correspondence if we ignore such tuples.

The homomorphism $h$ maps bottom vertices of $\G$ to bottom vertices of $\DG$, and by definition the bottom vertices of $\DG$ are the elements of $\Ast'$. If we consider a tuple in $\Xst'$ where each of its elements is a vertex set containing at least one bottom vertex of $\G$, then we know which element this maps to in $\Ast'$ as we know where $h$ mapped that vertex to in $\DG$. Therefore the homomorphism $h$ fully determines how $h'$ acts on vertex sets containing at least one bottom vertex of $\G$.

Now consider the case of a tuple in $\Xst'$ which has at least one element which is a vertex set containing only new vertices. We'll assume this element is in the $i$th position of the tuple. These new vertices were created in stage 3a, either in step 2 or step 3(b) by a specific top vertex of $\G$, or in step 2(b) by a specific base vertex of $\G$, and will therefore fall into exactly one of the following two cases.

First consider the case where a new vertex was created from a specific top vertex $e$ of $\G$. The new vertex was created as there is either no internal component $C$ with an edge to $e$ such that $i\in\Gamma(C)$ (Step 2), or all the internal components $C$ with edges to $e$ and $i\in\Gamma(C)$ have no edges to any base vertices (Step 3(b)). Given a homomorphism $h:\G\rightarrow\DG$ we can identify which top vertex $t$ in $\DG$ the vertex $e$ is mapped to. Therefore we can identify the unique internal component $C_D$ in $\DG$ with $i\in\Gamma(C_D)$ that has $t$ as its top vertex, and the base vertex of $\DG$ to which it is connected. Thus we have identified the base vertex in $\DG$ (i.e. the element in $\Ast'$) which is mapped to by our vertex set containing only new vertices, and therefore determined how $h'$ acts on this vertex set.

Now consider the case where the new vertex was created by a base vertex $b$ of $\G$ (Step 2(b)). Note that we only create a new vertex if there is at least one internal component of $\G$ with an edge to $b$ that has no edge to any top vertex. Furthermore we note that if there is such an internal component and it has $\Gamma(C)=\emptyset$ the tuple we would obtain in $\Xst'$ would consist of only new vertices and should be ignored. Let $C$ be an internal component with an edge to $b$ and $\Gamma(C)\neq\emptyset$. Under a homomorphism $h:\G\rightarrow\DG$ we can identify where the base vertex $b$ is mapped to in $\DG$, and also where the vertices of $C$ are mapped to. Thus we can identify the specific top vertex $t$ of $\DG$ that is at the end of the oriented path that $C$ mapped to. Now we can use the same method as before to identify the base vertex of $\DG$ that our new vertex must map to. That is we identify the unique internal component $C_D$ in $\DG$ with $i\in\Gamma(C_D)$ that has $t$ as its top vertex, and then identify the base vertex of $\DG$ to which it is connected. The homomorphism $h$ maps our new vertex to that base vertex of $\DG$, and therefore we have determined how $h'$ acts on this vertex set.

The above cases cover all possible vertex sets that can appear in tuples of $\Xst'$, and hence we have fully defined $h'$ given $h$. It remains to argue that $h'$ is a homomorphism from $\Xst'$ to $\Ast'$, i.e. for each tuple $\textbf{x}\in\Xst'$, $h'(\textbf{x})\in\Ast'$. By construction the tuple $\textbf{x}\in\Xst'$ corresponds to a top vertex $e$ of $\G$, which is mapped, by the homomorphism $h$, to a top vertex $t=h(e)$ of $\DG$. The vertex $t$ has a corresponding tuple in $\Ast'$, and this is the tuple $h'(\textbf{x})\in\Ast'$. This can be seen by considering an element of the tuple $\textbf{x}$ i.e. a vertex set - a set of base vertices of $\G$. These vertices all map to the same base vertex in $\DG$, which then corresponds with the appropriate element in the tuple $h'(\textbf{x})\in\Ast'$. Therefore for every tuple $\textbf{x}\in\Xst'$ we have $h'(\textbf{x})\in\Ast'$, and hence $h':\Xst'\rightarrow\Ast'$ is a homomorphism.

The argument in the reverse direction is much simpler. Given an arbitrary homomorphism $h':\Xst'\rightarrow\Ast'$, we can easily recover the homomorphism $h:\G\rightarrow\DG$. By construction the tuples in $\Xst'$ correspond to top vertices of $\G$, while their elements are sets of bottom vertices (and new vertices). Likewise the tuples of $\Ast'$ correspond to top vertices of $\DG$, and their elements are bottom vertices. Therefore given the homomorphism $h'$ we know which top vertices of $\G$ map to which top vertices of $\DG$, and likewise for the bottom vertices. Each internal component of $\G$ is then forced into mapping onto the only satisfiable path available in $\DG$, and thus we have recovered the homomorphism $h:\G\rightarrow\DG$.

Given that we can fully determine $h'$ given $h$, and vice versa, we have successfully proven that $h$ and $h'$ are in one-to-one correspondence.

It also follows from our construction of $\rho'_w$ that the corresponding homomorphisms will have the same cost as we now show.

Consider a homomorphism $h:\G\rightarrow\DG$, it will have the following cost:
\[cost(h) = \sum_{e\in W} w_e\cdot u(h(e))\]
The corresponding homomorphism $h':\Xst'\rightarrow \Ast'$ will have the following cost:
\[cost(h')=\sum_{\substack{(A_1,\ldots,A_n) \text{ s.t.} \\ \rho'_w \text{ is defined}}} \rho'_w \cdot \rho(h'(A_1),\ldots, h'(A_n)) \]

Let $e$ be a top vertex in $\G$, then $h(e)$ is a top level vertex in $\DG$. Also let $(A_1,\ldots,A_n)$ be a tuple in $\Xst'$, then $(h'(A_1),\ldots,h'(A_n))$ is a tuple in $\Ast'$. By construction each of the top vertices of $\G$ corresponds with a tuple in $\Xst'$, and each of the top level vertices in $\DG$ corresponds with a tuple in $\Ast'$. Given the one-to-one correspondence between the homomorphisms $h$ and $h'$, if the vertex $e$ in $\G$ corresponds with the tuple $(A_1,\ldots,A_n)$ in $\Xst'$, then the vertex $h(e)$ in $\DG$ corresponds with the tuple $(h'(A_1),\ldots,h'(A_n))$ in $\Ast'$. The cost of applying the homomorphism $h$ to the vertex $e$ is $u(h(e))$. Similarly the cost of applying the homomorphism $h'$ to the corresponding tuple $(A_1,\ldots,A_n)$ is $\rho(h'(A_1),\ldots,h'(A_n))$. It follows from the definitions of $u$ and $\rho$ that we have $u(h(e)) = \rho(h'(A_1),\ldots,h'(A_n))$.

Finally we argue that costs of $h$ and $h'$ are equal. First consider the case where all of the vertices $e\in W$ map to unique top vertices in $\DG$, that is if $e_1 \neq e_2 \Rightarrow h(e_1) \neq h(e_2)$. Then $\rho'_w=w_e$ by definition, and therefore $cost(h) = cost(h')$.
Now consider the case where $h$ maps at least two vertices in $W$, say $e_1$ and $e_2$, to the same top vertex in $\DG$. This implies that both $u(h(e_1)) = \rho(h'(A_1),\ldots,h'(A_n))$ and $u(h(e_2)) = \rho(h'(A_1),\ldots,h'(A_n))$. Therefore the weight $\rho'_w$ must be the sum of the weights $w_{e_1}$ and $w_{e_2}$, in order for $cost(h) = cost(h')$, and this holds by the definition of $\rho'_w$.

If the set $W$ is empty then the cost of $h$ is undefined. In turn there would be no tuples $(A_1,\ldots,A_n)$ such that $\rho'_w$ is defined, and the cost of $h'$ would also be undefined. This reduces the problem to the feasibility problem as in~\cite{Bulin13:reduc,Bulin14:reduc}.
\end{proof}

\subsection{Dealing with Short Components}\label{pathSat}

In this section, we provide the argument justifying Stage 2 of the algorithm from the proof of Lemma~\ref{reduceDtoAunary}.

Although we consider VCSPs, when eliminating short components of an input digraph $\G$,  we first check that the components are satisfiable in $\DG$ using the method for standard CSPs as in~\cite{Bulin13:reduc}. This involves testing that components are satisfiable in some fixed family of directed paths, and then we can identify their associated costs. Thus we need the following lemma:

\begin{lemma}[$\cite{Bulin13:reduc}$]\label{pathSatCSP}
Consider the paths $\Q_S$ where $S\subseteq\{1,...,k\}$ and recall that these paths have zigzags in every position $i$ where $i\notin S$
\begin{enumerate}
	\item $\CSP(\Q_{[k]\setminus\{i\}})$ is solvable in polynomial time for any $i\in\{1\zd k\}$, even when singleton unary relations are added.   	
	\item For any $S\subseteq\{1\zd k\}$ the problem $\CSP(\Q_{S})$ is solvable in polynomial time.
\end{enumerate}
\end{lemma}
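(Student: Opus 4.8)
The plan is to exhibit, for every path $\Q_S$, a \emph{majority polymorphism}, and then to appeal to the standard fact that a relational structure possessing a majority polymorphism has bounded width, so that its CSP is decided in polynomial time by enforcing strong $3$-consistency (path consistency for binary structures)~\cite{Barto09:boundedWidth}. This single ingredient yields both parts at once. Part 2 is then immediate once the polymorphism is produced. The path $\Q_{[k]\setminus\{i\}}$ in part 1 has only a single zigzag, so it is a special (and the simplest) case; the additional clause ``even when singleton unary relations are added'' is handled by arranging the majority operation to be \emph{conservative}, i.e. $m(x,y,z)\in\{x,y,z\}$. A conservative operation preserves every unary relation, in particular every singleton, so the expansion of $\Q_{[k]\setminus\{i\}}$ by singleton unary relations retains the majority polymorphism and hence remains solvable by the same consistency algorithm.

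To build the polymorphism, recall that each $\Q_S$ is a balanced oriented path, so its vertices carry a level function and every edge $(a,b)\in E^{\Q_S}$ satisfies $lvl(b)=lvl(a)+1$. Let $\prec$ be the linear order on $V^{\Q_S}$ obtained by listing the vertices in the order in which they are traversed along the path from $\iota\Q_S$ to $\tau\Q_S$, and let $m(x,y,z)$ be the $\prec$-median (the middle of the three arguments). Then $m$ is a majority operation and is manifestly conservative. Since $m$ is a single ternary operation and $E^{\Q_S}$ is a binary relation, the assertion ``$m$ is a polymorphism of $\Q_S$'' reduces to the edge-preservation check that for any three edges $(a_1,b_1),(a_2,b_2),(a_3,b_3)$ of $\Q_S$ the pair $(m(a_1,a_2,a_3),\,m(b_1,b_2,b_3))$ is again an edge.

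The verification of this edge-preservation condition is the main obstacle. The only adjacencies absent from $\Q_S$ are the ``diagonal'' non-edges internal to each zigzag (for a zigzag $u_0\to u_1\leftarrow u_2\to u_3$ the missing pair is $(u_0,u_3)$), together with trivial level mismatches. I would show that $\prec$ prevents the median from ever producing such a missing pair: for instance $u_0$ can be the median of three tails only when at least two of them equal $u_0$, which forces the two corresponding heads to be $u_1$ and hence the median head to be $u_1\neq u_3$. Carrying this out amounts to a finite case analysis on the local shape of $\Q_S$ — distinguishing single-edge segments from zigzag segments, and checking medians both within a segment and across the junctions where consecutive segments (and the $\bullet\to\bullet$ prefix and suffix) meet. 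I expect this bookkeeping, and in particular confirming that the path order never lets the median straddle two segments so as to create a non-edge, to be the only genuinely technical step; the single-zigzag case needed for part 1 is the easy end of this analysis.

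With $m$ in hand, part 2 follows directly: a majority polymorphism gives bounded width, so strong $3$-consistency decides $\CSP(\Q_S)$ and runs in polynomial time~\cite{Barto09:boundedWidth}. For part 1, conservativity of $m$ shows that adjoining any singleton unary relations leaves $m$ a polymorphism, so the same algorithm decides $\CSP(\Q_{[k]\setminus\{i\}})$ with such relations added, completing the proof.
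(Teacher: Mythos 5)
Your proposal is correct, but it does not follow the paper's route: the paper offers no proof of this lemma at all, importing it from~\cite{Bulin13:reduc}, where it is established by a direct algorithm for deciding homomorphism of a connected balanced digraph into an oriented path (essentially the classical sweep argument, run for each admissible placement of a chosen vertex at a prescribed level). Your alternative via a \emph{conservative majority polymorphism} is sound, and the one step you leave as a sketch does go through --- in fact more cleanly than your segment-by-segment case analysis suggests. Writing $p_j,q_j$ for the positions (in the traversal order $v_0,\dots,v_k$) of the tail and head of the $j$th edge, one has $|p_j-q_j|=1$, so the medians of the tails and of the heads differ by at most $1$; the case ``median tail $=$ median head'' is impossible because unwinding it forces two of the three edges to occupy the same or adjacent slots with incompatible orientations, and in the case ``median head $=$ median tail $+1$'' (resp.\ $-1$) a pigeonhole argument (at least two tails at position $\le x$ and at least two heads at position $\ge x+1$ forces one of the three given edges to be exactly $(v_x,v_{x+1})$) exhibits the required edge of $\Q_S$. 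Two small inaccuracies in your sketch are worth flagging: the dangerous images are not the ``diagonal'' pairs such as $(u_0,u_3)$ (those are at distance $3$ and unreachable, since the median is $1$-Lipschitz for the path metric) but rather consecutive pairs with the wrong orientation and loops; and the claim that $u_0$ can be the median of three tails only if two of them equal $u_0$ fails once the zigzag sits inside the longer path, though the pigeonhole argument renders this moot. As for what each approach buys: the algorithm of~\cite{Bulin13:reduc} is elementary and self-contained and directly yields the level-constrained version actually used in Stage~2, whereas your argument is shorter, handles all $S$ and all unary expansions uniformly (so the restriction to $[k]\setminus\{i\}$ in part~1 is unnecessary), and fits the algebraic viewpoint of Section~\ref{balancedSec}, at the cost of invoking the bounded-width machinery of~\cite{Barto09:boundedWidth} (for which the older strict-width-$2$ result of Feder and Vardi would already suffice).
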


\begin{definition}
Let $\Q_{S_1}$, $\Q_{S_2}\zd \Q_{S_l}$ be paths that all have the same initial (or terminal) vertex in $\DG$. Define $\Fan$ to be the fan structure obtained when these paths are amalgamated at their shared vertex $v$, and $u|_{\Fan}$ be our unary function restricted to $\Fan$.
\end{definition}

\begin{lemma}\label{pathSatUnaryVCSP}
$\MCHom(\Fan,u|_{\Fan})$, restricted to inputs of height less than $m$, is polynomial-time solvable.
\end{lemma}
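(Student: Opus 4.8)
The plan is to exploit two features of $\Fan$: it is assembled from the single paths $\Q_{S_i}$, whose CSPs are polynomial-time solvable by Lemma~\ref{pathSatCSP}, and the restricted cost function $u|_{\Fan}$ is supported only on the $R$-vertices of $\DG$ lying in $\Fan$, which sit at the extreme level opposite to the shared vertex $v$. First I would reduce to a connected input. Since $\Fan$ is balanced, any input admitting a homomorphism must itself be balanced of height less than $m$, which is checkable in polynomial time (reject otherwise). The cost of a homomorphism is a sum of per-vertex contributions, hence additive over the connected components of the input, while feasibility is the conjunction over components. So it suffices to solve $\MCHom(\Fan,u|_{\Fan})$ for a single connected input $C$ of height $h_C<m$ and then sum the optimal costs.

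For connected balanced $C$ mapped into the balanced $\Fan$, the difference $lvl(h(x))-lvl(x)$ is constant; call it the offset. There are at most $m+1$ admissible offsets $\ell\in\{0,\ldots,m-h_C\}$, and I would treat each separately, using the observation that $v$ is the unique vertex of $\Fan$ at its own level (level $0$ if $v$ is the common initial vertex, level $m$ if it is the common terminal vertex), while each branch $\Q_{S_i}$ likewise has unique extremal vertices at levels $0$ and $m$. At any offset for which no vertex of $C$ lands on the apex level of $v$, the image avoids $v$ and therefore lies inside a single branch $\Q_{S_i}$; I would test each branch by attaching to $C$ a hanging directed path so as to pin the offset, whereupon uniqueness of the extremal vertices makes the pinning automatic and the test becomes an instance of $\CSP(\Q_{S_i})$, polynomial by Lemma~\ref{pathSatCSP} (the singleton-unary variant is available should an explicit pin be convenient).

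The one remaining offset is the apex-hit offset: $\ell=0$ when $v$ is the common initial vertex, or $\ell=m-h_C$ when $v$ is the common terminal vertex. There the base vertices (resp.\ top vertices) of $C$ are forced onto the unique apex-level vertex $v$; deleting them splits $C$ into connected pieces, each at levels bounded away from the apex, so each piece must map into a single branch. Because the deleted vertices all coincide at $v$, the pieces are independent, and feasibility at this offset is the conjunction, over pieces, of solvability in some branch, again reduced to single-path CSPs. The cost is then read off from the support of $u|_{\Fan}$: its nonzero values occur only on the level-$m$ $R$-vertices, and an input vertex reaches level $m$ only when it is a top vertex of $C$ at the maximal offset $\ell=m-h_C$. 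Hence the cost is $0$ at every non-maximal offset, so only feasibility matters there; at the maximal offset the top vertices are distributed piece-by-piece among branches, each branch contributing its fixed terminal cost times the total weight of the $W$-vertices it carries, and I minimise over these finitely many branch assignments.

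Assembling the optimum for $C$ is then a minimisation over the $O(m)$ offsets, and summing over components finishes the algorithm; every subroutine is a single-path CSP and the number of calls is polynomial, so the whole procedure runs in polynomial time. I expect the main obstacle to be the apex decomposition, namely verifying that at the apex-hit offset a homomorphism exists if and only if each piece of $C$ left after forcing the apex vertices to $v$ independently maps into some single branch, and arranging the level-pinning so that each such test legitimately reduces to an instance covered by Lemma~\ref{pathSatCSP}. Once that is in place, the cost bookkeeping (costs live only at the top level and are incurred only at the maximal offset) is routine.
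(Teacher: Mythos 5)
Your proposal is essentially the paper's argument: split according to whether the image meets the shared vertex $v$ (equivalently, whether the offset hits the apex level), reduce each case to single-path CSPs via Lemma~\ref{pathSatCSP} with level pinning, and observe that $u|_{\Fan}$ can be nonzero only on the level-$m$ vertices, so cost is incurred only by top vertices of a component at the maximal offset. Your explicit enumeration of offsets packages the cost bookkeeping somewhat more cleanly than the paper's case analysis (the paper's Case~1/Case~2 split on whether a solution involves $v$ corresponds exactly to your non-apex/apex offsets, and your apex decomposition into independent pieces is the paper's Case~2, inherited from Bulin et al.). One concrete caveat: the ``hanging directed path'' you suggest for pinning levels does not work, since a directed path of length greater than $2$ need not map into a branch $\Q_{S_i}$ containing zigzags, so attaching one can destroy feasibility of an otherwise satisfiable instance; you must fall back on the singleton-unary pinning you mention parenthetically, which is also what the paper uses and which stays polynomial because each $\Q_{S_i}$ is an oriented path (a tree). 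That same singleton pinning is what resolves the gluing issue you flag as the main obstacle: a piece vertex adjacent to an apex vertex of the input must be sent to the unique neighbour of $v$ in the chosen branch, which is a singleton constraint per branch.
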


\begin{proof}
Consider an instance $H=(\H,W)$ of $\MCHom(\Fan,u|_{\Fan})$. We may assume $\H$ has a homomorphism to $\Fan$, has height strictly less than the height of $\Fan$, and is a single component. We call a homomorphism from $\H$ to $\Fan$ a solution, and $\H$ is satisfiable if it has a solution. We now consider the following cases:

\begin{enumerate}[1.]
	\item First check if $\H$ has a solution that does not involve $v$.

In this case $\H$ must be satisfiable within at least one of the paths $\Q_{S_i}$ of $\Fan$ and by applying (2) of Lemma~\ref{pathSatCSP} to every path $\Q_{S_i}$ we can identify all paths where $\H$ is feasible. It is possible that a vertex $u$ of $\H$ can be interpreted at different heights in a path $\Q_{S_i}$, and we use the unary singletons to fix a particular height for $u$ to test for a solution. If we don't find a solution fixing $u$ at that height we successively try new heights for $u$. If we find no solutions for $\H$ at any height in any path $\Q_{S_i}$ then we continue to case 2.
If there is a solution and a vertex of $\H$ maps to a top level vertex $t$ in $\Fan$ and $t\in W$ then the cost of that mapping is $u|_{\Fan}(t)$. If $\H$ has multiple solutions with non-zero cost then we choose the solution with the minimal cost, and $\H$ reduces to a single valued tuple of the objective function of $\A$ determined by $t$, with cost $u|_{\Fan}(t)$. If no vertex of $\H$ maps to a vertex in $W$ in any feasible solution then $\H$ has no influence on the optimisation problem and is ignored for the remainder of the reduction.

	\item $\H$ has a solution involving $v$.

If $v\notin W$ we follow the same procedure given in~\cite{Bulin13:reduc} as there is no optimisation to consider. That procedure is given here for completeness. First choose a vertex $h\in\H$, and check if $h$ can be interpreted at height $m$ in the same way as (1) of Lemma~\ref{pathSatCSP}. Consider the components $C_j$ of the induced subgraph obtained by removing the vertices of height $m$ from $\H$. Test every component $C_j$ for satisfaction in a path $\Q_{S_i}$, with the highest level vertices of $C_j$ constrained to be at height $m-1$ in $\Q_{S_i}$. If every component $C_j$ can be satisfied in some path $\Q_{S_i}$, then $\H$ is satisfiable in $\Fan$.
Should $\H$ not be satisfiable in $\Fan$ for a particular choice of vertex $h$ then select a new vertex $h$, and repeat until $\H$ is found to be satisfiable in $\Fan$ (as we assume $\H$ has a feasible solution).

If $v\in W$ then $\H$ reduces to a single valued tuple in the objective function of $\A$ determined by $v$, with cost $u|_{\Fan}(v)$.	
\end{enumerate}

\end{proof}

\subsection{Preservation of Algebraic Properties}\label{balancedSec}

The study of algebraic properties (e.g. polymorphisms) of constraint languages has been very useful in classifying the computational complexity of CSPs. It has been the basis of a number of important results such as the CSP dichotomy proof on 3-element domains~\cite{Bula06:threeElement} and the work of Barto and Kozik~\cite{Barto09:boundedWidth} describing constraint languages that are solvable by local consistency methods (problems of bounded width). The algebraic CSP dichotomy conjecture~\cite{Bula05:algdichotomy} predicts, in terms of polymorphisms, where the split between polynomial time and $\NP$-complete problems occurs.

The important properties of polymorphisms are usually given by {\em identities}, i.e. equalities of terms that hold for all choices of the variables
involved in them. Here are some of the important types of operations:
\begin{itemize}
\item An operation $f$ is {\em idempotent} if it satisfies the identity $f(x,\ldots,x)=x$.
\item A $k$-ary ($k\ge 2$) operation $f$ is {\em weak near unanimity (WNU)} if it is idempotent and satisfies the identities
$f(y,x,\dots,x,x)=f(x,y,\dots,x,x)=\cdots=f(x,x,\dots,x,y).$
\item A $k$-ary ($k\ge 2$) operation $f$ is {\em cyclic} if $f(x_1,x_2,\dots,x_k)=f(x_2,\dots,x_k,x_1)$.
\item A $k$-ary ($k\ge 2$) operation $f$ is {\em symmetric} if $f(x_1,\dots,x_k)=f(x_{\pi(1)},\dots,x_{\pi(k)})$ for each permutation $\pi$ on $\{1,\dots,k\}$.
\end{itemize}
For example, the algebraic dichotomy conjecture can be re-stated as follows~\cite{Bula05:algdichotomy,Maroti08:existence}: for a core structure $\Ast$, $\CSP(\Ast)$ is tractable if
$\Ast$ has a WNU polymorphism of some arity, and $\NP$-complete otherwise. For a core $\Ast$, the problems $\CSP(\Ast)$ has bounded width if and only if $\Ast$ has WNU polymorphisms of almost all arities~\cite{Barto09:boundedWidth}. For a core weighted structure $w\Ast$ such that each weighted relation in $w\Ast$ is defined on all tuples of the corresponding arity, $\VCSP(w\Ast)$ is solvable in polynomial time if
$w\Ast$ has symmetric weighted polymorphisms of all arities~\cite{Thap12:dichotomy}, and it is $\NP$-hard otherwise.

It is well known and easy to see that a (weighted or unweighted) structure is a rigid core if and only if all its polymorphisms are idempotent.

An operational signature is a set of operation symbols with arities assigned to them. An identity is an expression $t_1=t_2$ where $t_1$ and $t_2$ are terms in this signature. An identity $t_1=t_2$ is said to be {\em linear} if both $t_1$ and $t_2$ involve at most one occurrence of an operation symbol, and {\em balanced} if the variables occuring in $t_1$ and $t_2$ are the same (e.g. $f(x,x,y)=g(y,x,x)$). (This notion is not related to balanced digraphs). A set $\Sigma$ of identities  is linear if it only contains linear identities, idempotent if for each operation symbol, $f$, the identity $f(x,x,...,x)=x$ is in $\Sigma$ and balanced if all of the identities in $\Sigma$ are balanced.
Note the identities defining WNU, symmetric and cyclic operations above are linear and balanced.

Recall the structure $w\Ast_0$ from Lemma~\ref{expressiblelem2}. Since $w\Ast$ is expressible in $w\Ast_0$, every weighted polymorphism
of $w\Ast_0$, when restricted to $D$, is a weighted polymorphism of $w\Ast$ (see~\cite{Cohen12:weightedPol}). Hence the presence of a weighted polymorphism $\omega:C\rightarrow \Q_+$ such that the operations in $C$ satisfy some set of identities carries over from $w\Ast_0$ to $w\Ast$.
We show that, for linear balanced sets of identities, the converse is also true.

First we must introduce some facts about connected components of powers of $\DG$. Let $\DG^k$ be the direct $k$th power of the digraph $\DG$, i.e. its vertices are the $k$-tuples of elements of $\DG$, and $(\mathbf{c},\mathbf{d})$ is an edge in $\DG^k$ if and only if, for all $1\le i\le k$, $(c_i,d_i)$ is an edge in $\DG$. Consider the diagonal of $\DG$, i.e. the set $\{(c,\ldots,c)\mid c\in V^\DG\}$. Clearly, the diagonal is contained in one (weakly) connected component of $\DG^k$, denote it by $\Delta_k$. We will need some properties of $\DG^k$ proven in~\cite{Bulin14:reduc}.

\begin{lemma}[\cite{Bulin14:reduc}]
We have both $D^k\subseteq \Delta_k$ and $R^k\subseteq \Delta_k$.
\end{lemma}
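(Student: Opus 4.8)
The plan is to argue entirely about weak connectivity in $\DG^k$, reducing both inclusions to a single synchronization observation about how the paths $\Q_S$ can be traversed in lockstep across coordinates. Throughout I view a walk witnessing weak connectivity as a sequence of steps, each of which either pushes every coordinate forward along an edge of $\DG$ (a forward step) or pulls every coordinate backward along an edge (a backward step); this is forced because a single edge of $\DG^k$ corresponds to all $k$ coordinates moving forward simultaneously.

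First I would establish the key observation: the two building blocks of the paths $\Q_S$ admit a common traversal schedule. A zigzag $\bullet\to\bullet\leftarrow\bullet\to\bullet$ is, by definition, traversed from initial to terminal vertex by the pattern forward, backward, forward. A single edge $\bullet\to\bullet$ can be padded to the same pattern by going forward, then backward along the same edge, then forward again, which only reuses that one edge and ends at the terminal vertex. In both cases the block begins at some level $\ell$ and ends at level $\ell+1$. Applying this block-by-block to the $n+2$ blocks of $\Q_S$ (the initial edge, the blocks $\Q_{S,1},\ldots,\Q_{S,n}$, and the final edge) shows that for \emph{every} $S$ the path $\Q_S$ can be walked from $\iota\Q_S$ to $\tau\Q_S$ by the same sequence of $3(n+2)$ forward/backward steps, independently of $S$.

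Next I would synchronize $k$ such traversals in the product. Given base vertices $d_1,\ldots,d_k\in D$ and any fixed top vertex $\mathbf a\in R$, each coordinate $j$ is connected to $\mathbf a$ in $\DG$ by the path $\Q_{S_j}$ with $S_j=\{i:d_j=a_i\}$, such a path existing for every base/top pair by the construction of $\DG$. Since all these paths share the same length-$3(n+2)$ forward/backward schedule, performing the steps simultaneously in all coordinates yields a valid walk in $\DG^k$ from $(d_1,\ldots,d_k)$ to $(\mathbf a,\ldots,\mathbf a)$. As $(\mathbf a,\ldots,\mathbf a)$ lies on the diagonal, it belongs to $\Delta_k$, whence so does $(d_1,\ldots,d_k)$; this gives $D^k\subseteq\Delta_k$. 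The inclusion $R^k\subseteq\Delta_k$ follows by running the same synchronized walk in reverse: given $(\mathbf a_1,\ldots,\mathbf a_k)\in R^k$, fix a base vertex $d\in D$ and use the reversed schedules of the paths $\Q_{S_j}$ (now with $S_j=\{i:d=(a_j)_i\}$) to walk each $\mathbf a_j$ down to $d$, connecting $(\mathbf a_1,\ldots,\mathbf a_k)$ to the diagonal vertex $(d,\ldots,d)\in\Delta_k$.

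The only real content is the padding/synchronization step, and that is where I expect the main obstacle to lie: one must check that forcing single edges and zigzags onto the common forward-backward-forward schedule keeps the padded single-edge traversal inside $\DG$ and keeps the levels matched block-by-block, so that the $k$ coordinates never fall out of step. Everything else is routing: the fact that every top vertex is reachable from every base vertex, guaranteed by the construction of $\DG$, is precisely what lets all $k$ coordinates be sent to a single diagonal target.
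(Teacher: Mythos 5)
Your argument is correct: the observation that a single edge can be padded to the same forward--backward--forward schedule as a zigzag, so that every $\Q_S$ admits one common level-preserving traversal schedule (equivalently, $\Q_\emptyset$ maps homomorphically onto every $\Q_S$ fixing endpoints), is exactly what lets you run the $k$ coordinates in lockstep to a common diagonal target in $R$ (resp.\ $D$). The paper itself imports this lemma from Bulin et al.\ without proof, and your synchronization argument is essentially the standard one given there, just phrased in terms of walks in $\DG^k$ rather than homomorphisms from a common oriented path.
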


\begin{lemma}[\cite{Bulin14:reduc}]
Assume that a connected component $\Delta'$ of $\DG^k$ contains a tuple $\mathbf{c}=(c_1,\ldots,c_k)$ such that
$lvl(c_1)=\ldots =lvl(c_k)$. Then every $\mathbf{d}=(d_1,\ldots,d_k)$ satisfies $lvl(d_1)=\ldots =lvl(d_k)$ and also
either $\Delta'=\Delta_k$ or $\Delta'$ is one-element.
\end{lemma}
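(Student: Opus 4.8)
The plan is to prove the two assertions in turn, using only the balancedness of $\DG$, the explicit zigzag structure of the subdivided paths $\Q_S$, and the inclusions $D^k\subseteq\Delta_k$, $R^k\subseteq\Delta_k$ from the preceding lemma. For the first assertion I would argue via the level vector $(lvl(d_1),\ldots,lvl(d_k))$ of a tuple $\mathbf{d}$: since $\DG$ is balanced every edge raises the level by exactly one, so traversing any edge of $\DG^k$ (forwards or backwards) changes the level vector by exactly $+\mathbf{1}$ or $-\mathbf{1}$. Hence all pairwise differences $lvl(d_i)-lvl(d_j)$ are invariant along every walk and thus constant on $\Delta'$; as they vanish at $\mathbf{c}$ they vanish at every $\mathbf{d}\in\Delta'$, which is precisely the statement that each such $\mathbf{d}$ is level-constant (at a level that may depend on $\mathbf{d}$).

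For the second assertion I would first record the local shape of $\DG$: a vertex is a source (no in-edge) precisely when it is a base vertex or the local minimum of a zigzag, and a sink (no out-edge) precisely when it is a top vertex or the local maximum of a zigzag. In $\DG^k$ a tuple has an outgoing edge iff no coordinate is a sink, and an incoming edge iff no coordinate is a source. Hence if $\mathbf{c}$ is isolated then $\Delta'=\{\mathbf{c}\}$ and we are done; otherwise either no coordinate of $\mathbf{c}$ is a sink or no coordinate is a source. If the common level $\ell$ of $\mathbf{c}$ is $0$ or $n+2$ then already $\mathbf{c}\in D^k$ or $\mathbf{c}\in R^k\subseteq\Delta_k$, so I may assume $0<\ell<n+2$ and, by symmetry, that no coordinate of $\mathbf{c}$ is a sink; the goal is then to connect $\mathbf{c}$ to $R^k\subseteq\Delta_k$, forcing $\Delta'=\Delta_k$.

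The crux is to exhibit a single oriented walk pattern that simultaneously lifts every coordinate from level $\ell$ to the top: I would take the pattern $(\mathrm{up},\mathrm{down},\mathrm{up})$ repeated $n+2-\ell$ times. The point is that within any one path $\Q_S$ a single block $(\mathrm{up},\mathrm{down},\mathrm{up})$ advances a non-sink vertex by exactly one level, whatever the relevant position of $\Q_S$ is: for a zigzag $v_0\to v_1\leftarrow v_2\to v_3$ the block visits $v_0,v_1,v_2,v_3$ and lands at the junction $v_3$, and for a single edge $u\to w$ it visits $u,w,u,w$ and lands at $w$. Because all $k$ coordinates obey the same pattern, the resulting sequence of tuples is a genuine walk in $\DG^k$ ending at a tuple all of whose coordinates are at level $n+2$, i.e.\ in $R^k$; the symmetric pattern $(\mathrm{down},\mathrm{up},\mathrm{down})$ deals with the ``no source'' case by reaching $D^k$.

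I expect the main obstacle to be exactly this last verification: one must check that the uniform block works for every possible type of starting vertex (a junction, a local minimum, or a vertex sitting mid-zigzag) and that the out- and in-neighbour chosen at each step of each coordinate can be selected consistently so that the block always terminates at the next junction. This needs a short but careful case analysis of the three kinds of level-$\ell$ vertices of a path; once it is settled, the dichotomy ``isolated, or connected to the diagonal'' follows immediately, and with it the claim that $\Delta'$ is either one-element or equal to $\Delta_k$.
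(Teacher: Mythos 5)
The paper does not actually prove this lemma --- it is imported verbatim from Bulin et al.\ with only a citation --- so there is no in-paper argument to compare against; your proof is correct and is in the same spirit as the cited source. The first assertion is immediate as you say: balancedness makes every edge of $\DG^k$ shift all $k$ coordinate levels by $+1$ simultaneously, so the vector of pairwise level differences is constant on a weak component. For the second assertion, the case analysis you defer does close, and is easier than you fear: under the assumption that no coordinate is a sink, a level-$\ell$ coordinate with $0<\ell<n+2$ can only be a junction or the valley of the zigzag between levels $\ell$ and $\ell+1$, because the third kind of level-$\ell$ vertex (the peak of the zigzag between $\ell-1$ and $\ell$) is itself a sink and is therefore excluded. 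In both remaining cases the block $(\mathrm{up},\mathrm{down},\mathrm{up})$ can be routed to end at the junction at level $\ell+1$: as $u,w,u,w$ on a single edge, as $v_0,v_1,v_2,v_3$ from the junction of a zigzag, and as $v_2,v_3,v_2,v_3$ from its valley. Since junctions at levels $1,\dots,n+1$ are never sinks, the blocks compose and you reach $R^k\subseteq\Delta_k$; the dual pattern handles the no-source case. Finally, note that the singleton alternative is genuinely attained (e.g.\ one coordinate a peak and another a valley at the same level gives an isolated vertex of $\DG^k$), so your trichotomy --- isolated, or no coordinate a sink, or no coordinate a source --- is exactly the right disjunction.
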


As in~\cite{Bulin13:reduc,Bulin14:reduc} we define a linear ordering on the vertices of the digraph $\DG$. For every $e=(a,\mathbf{r})\in D\times R$, denote the path $\Q_{\{i:a=r_i\}}$ in $\DG$ by $\P_e$. Also, write $\P_{e,l}$ to mean $\Q_{S,l}$ where $\P_e=\Q_S$. First, fix a linear ordering $\preceq_1$ on $D$ and extend it to any linear ordering $\preceq$ of $E=D\times R$ such that if $(d_1,\mathbf{t}_1)\preceq (d_2,\mathbf{t}_2) \preceq (d_1,\mathbf{t}_3)$ then $d_1=d_2$.
Now define the mapping $\epsilon:V^\DG\rightarrow E$ by setting $\epsilon(x)$ to be the $\preceq$-minimal $e\in E$ such that $x\in\P_e$. Finally we define the linear order $\sqsubseteq$ on the vertices of the digraph $\DG$, where $x\sqsubset y$ if either:\\
$lvl(x) < lvl(y)$, or \\
$lvl(x) = lvl(y)$ and $\epsilon(x)\prec\epsilon(y)$, or \\
$lvl(x) = lvl(y)$, $\epsilon(x)=\epsilon(y)$, and $x$ is closer to $\iota\P_{\epsilon(x)}$ than $y$.

\begin{lemma}[\cite{Bulin14:reduc}]
Let $K$ and $L$ be subsets of $V^\DG$ such that $L\not\subseteq R$ and
\begin{itemize}
\item for every $x\in K$ there is $y'\in L$ such that $x\rightarrow y'$ is an edge in $\DG$, and
\item for every $y\in L$ there is $x'\in K$ such that $x'\rightarrow y$ is an edge in $\DG$.
\end{itemize}
If $c$ and $d$ are the $\sqsubset$-minimal elements of $K$ and $L$, respectively, then $c\rightarrow d$ is an edge in $\DG$.
\end{lemma}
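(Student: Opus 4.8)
The plan is to first pin down the levels of $c$ and $d$, and then to reduce the claim to a purely local statement inside a single path $\Q_S$.

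First I would establish that $lvl(d)=lvl(c)+1$. Since $\sqsubset$ orders vertices primarily by level, $c$ and $d$ are vertices of smallest level in $K$ and $L$, respectively. As $c\in K$, there is $y'\in L$ with $c\to y'$, so $lvl(d)\le lvl(y')=lvl(c)+1$; as $d\in L$, there is $x'\in K$ with $x'\to d$, so $lvl(c)\le lvl(x')=lvl(d)-1$. Hence $lvl(d)=lvl(c)+1=:\ell+1$, with $lvl(x')=lvl(c)=\ell$ and $lvl(y')=lvl(d)=\ell+1$, and minimality of $c,d$ gives $c\sqsubseteq x'$ and $d\sqsubseteq y'$. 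Since $L\not\subseteq R$ and $R$ is exactly the set of top (level $n+2$) vertices, the $\sqsubset$-minimal $d$ has $lvl(d)\le n+1$, so $\ell\le n$ and neither $d$ nor $y'$ is a top vertex. It then suffices to prove the local claim: whenever $x'\to d$, $c\to y'$, $c\sqsubseteq x'$, $d\sqsubseteq y'$ with $c,x'$ at level $\ell$ and $d,y'$ at level $\ell+1\le n+1$, we have $c\to d$.

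For the case $\ell\ge 1$ all four vertices are internal, hence each lies on a unique path, and every edge incident to an internal vertex is an edge of that path. Thus $y'$ lies on $\P_{\epsilon(c)}$ and $d$ lies on $\P_{\epsilon(x')}$, so $\epsilon(y')=\epsilon(c)$ and $\epsilon(d)=\epsilon(x')$. Since $c,x'$ share a level, $c\sqsubseteq x'$ gives $\epsilon(c)\preceq\epsilon(x')$, and likewise $d\sqsubseteq y'$ gives $\epsilon(d)\preceq\epsilon(y')$; chaining these yields $\epsilon(c)\preceq\epsilon(x')=\epsilon(d)\preceq\epsilon(y')=\epsilon(c)$, so all four vertices lie on one common path $\P_e$. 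Restricted to a single level of one path, $\sqsubset$ is just distance from $\iota\P_e$, so $c$ is no later than $x'$ and $d$ no later than $y'$ along $\P_e$. All up-edges from level $\ell$ to level $\ell+1$ lie inside the block $\Q_{S,\ell}$, which is either a single edge or a zigzag $z_0\to z_1\leftarrow z_2\to z_3$; inspecting these two cases (using that $z_0$ has out-neighbour $z_1$ while $z_2$ has out-neighbours $z_1,z_3$, so each out-neighbourhood is an initial segment at level $\ell+1$) shows that the hypotheses force $c\to d$. For the remaining case $\ell=0$, the vertices $c,x'$ are base vertices, i.e.\ elements of $D$, and the edges $x'\to d$, $c\to y'$ are initial edges of paths based at $x'$ and $c$; hence $\epsilon(d)=(x',\mathbf a_d)$ and $\epsilon(y')=(c,\mathbf a_{y'})$, while $\epsilon(c),\epsilon(x')$ have first coordinates $c,x'$. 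Because $\preceq$ refines $\preceq_1$ on the first coordinate, $c\sqsubseteq x'$ gives $c\preceq_1 x'$ and $d\sqsubseteq y'$ gives $x'\preceq_1 c$; as $\preceq_1$ is a linear order this forces $c=x'$, so $x'\to d$ is precisely $c\to d$.

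I expect the local analysis for $\ell\ge 1$ to be the main obstacle. The reduction to a single path relies on internal vertices having all of their edges inside one path, which is exactly why the hypothesis $L\not\subseteq R$ (ruling out $d$ being a shared top vertex) is indispensable; and the zigzag is the only configuration in which a level-$\ell$ vertex has two up-neighbours, so the argument there must crucially use $d\sqsubseteq y'$ to keep $d$ within the out-neighbourhood of $c$.
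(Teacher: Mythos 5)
Your argument is correct. Note that the paper itself gives no proof of this lemma --- it is imported verbatim from Bulin~et~al.\ \cite{Bulin14:reduc} --- so there is no in-paper proof to compare against; your reconstruction (pin down $lvl(d)=lvl(c)+1$, use $L\not\subseteq R$ to keep $d,y'$ below the top level so that internal vertices localize everything to a single path $\P_e$ and then to the single block $\Q_{S,\ell}$, and finish by inspecting the single-edge and zigzag cases, with the separate base-level case forcing $c=x'$ via the interval structure of $\preceq$) follows essentially the same level-and-block case analysis as the original source. The only place you are terse is the zigzag subcase, but the two facts you isolate ($N^+(z_0)=\{z_1\}\subseteq N^+(z_2)=\{z_1,z_3\}$ and $d\sqsubseteq y'$ forcing $d$ into $N^+(c)$) are exactly what is needed to close it.
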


Now we introduce the main theorem of this section. 

\begin{theorem}\label{thm:algebra}
Let $w\Ast$ and $(\DG,u)$ be as in Theorem~\ref{thm:main}. If $w\Ast$ has a $k$-ary weighted polymorphism $\omega:C\rightarrow \Q_+$
such that operations in $C$ satisfy a linear balanced set $\Sigma$ of identities then $(\DG,u)$ also has a $k$-ary weighted polymorphism
$\omega_0:C_0\rightarrow \Q_+$ such that there is a bijection between $C$ and $C_0$, and the operations in $C_0$ satisfy $\Sigma$.
In particular, if $\omega$ is such that some operation in $C$ (or all non-projection operations in $C$) is WNU then the same holds for $\omega_0$.
Similarly, if $\omega$ is such that some operation in $C$ (or all non-projection operations in $C$) is cyclic or symmetric then the same holds for $\omega_0$.
\end{theorem}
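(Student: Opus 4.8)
The plan is to lift the weighted polymorphism $\omega$ of $\rho$ to a weighted polymorphism $\omega_0$ of $w\Ast_0=(\DG,u)$ by applying the Bulin et al. construction to each operation in its support. Concretely, for every $f\in C$ let $f_0\in\Pol^{(k)}(\DG)$ be the polymorphism of $\DG$ obtained from $f$ exactly as in~\cite{Bulin14:reduc}, put $C_0=\{f_0\mid f\in C\}$, and define $\omega_0(f_0)=\omega(f)$. The map $f\mapsto f_0$ is the bijection required by the statement; it is injective because, by the construction, $f_0$ restricted to $R^k$ recovers $f$. Recall that $w\Ast_0$ carries exactly two weighted relations, the $0$-weighted edge relation $E^\DG$ and the unary function $u$, whose underlying relation is all of $V^\DG$. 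Each $f_0$ is a polymorphism of $E^\DG$ by construction and is trivially a polymorphism of $u$, since the latter's underlying relation is the whole vertex set; hence $C_0\subseteq\Pol^{(k)}(w\Ast_0)$. Finally, since the identities in $\Sigma$ are linear and balanced, the fact that the operations of $C_0$ satisfy $\Sigma$ is precisely the content of the identity-preservation result of~\cite{Bulin14:reduc}; the WNU, cyclic and symmetric clauses then follow because, as already observed in the text, the identities defining those operations are themselves linear and balanced.

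It remains to check the three defining conditions of a weighted polymorphism of $w\Ast_0$. The first, $\sum_{f_0\in C_0}\omega_0(f_0)=0$, is immediate from $\sum_{f\in C}\omega(f)=0$. For the second, the correspondence of~\cite{Bulin14:reduc} sends the $i$th projection on $D$ to the $i$th projection on $V^\DG$, so if $\omega_0(f_0)=\omega(f)<0$ then $f$ is a projection and hence so is $f_0$. The third condition must be verified for each weighted relation. For $E^\DG$ it is trivial: all edge weights are $0$, so for every edge $(\mathbf{c},\mathbf{d})$ of $\DG^k$ the quantity $\sum_{f_0}\omega_0(f_0)\cdot E^\DG(f_0(\mathbf{c}),f_0(\mathbf{d}))$ equals $0$, each $f_0$ being edge-preserving. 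The real work is the inequality for $u$.

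The heart of the argument is to show
\[\sum_{f_0\in C_0}\omega_0(f_0)\cdot u\bigl(f_0(x_1\zd x_k)\bigr)\le 0\qquad\text{for all }x_1\zd x_k\in V^\DG,\]
and the key point is that $u$ is supported precisely on the top level $R$ of $\DG$. I would first record two facts. (a) Since $R^k\subseteq\Delta_k$ and $f$ preserves $R$, the construction gives $f_0(\mathbf{r}_1\zd\mathbf{r}_k)=f(\mathbf{r}_1\zd\mathbf{r}_k)\in R$ for $\mathbf{r}_i\in R$; that is, $f_0$ agrees with $f$ on $R^k$. (b) Each $f_0$ preserves levels: on the main component $\Delta_k$ this follows from the level-consistency lemma stated above together with idempotency (guaranteed since $w\Ast$, and hence $\DG$, is a rigid core), because every tuple of $\Delta_k$ is constant-level and the diagonal is fixed; on the remaining components it is built into the definition of $f_0$ via the $\sqsubseteq$-minimal-element rule and the minimal-edge lemma. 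As $R$ is exactly the set of top-level vertices, (b) yields $f_0(x_1\zd x_k)\in R\Rightarrow(x_1\zd x_k)\in R^k$. Consequently every summand vanishes when $(x_1\zd x_k)\notin R^k$, while for $(x_1\zd x_k)=(\mathbf{r}_1\zd\mathbf{r}_k)\in R^k$ fact (a) and the definition of $u$ give $u(f_0(\mathbf{r}_1\zd\mathbf{r}_k))=\rho(f(\mathbf{r}_1\zd\mathbf{r}_k))$, so the displayed sum becomes $\sum_{f\in C}\omega(f)\cdot\rho(f(\mathbf{r}_1\zd\mathbf{r}_k))$, which is $\le 0$ by the third defining condition of $\omega$ on $\rho$.

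The step I expect to be the main obstacle is fact (b) — ruling out that some tuple outside $R^k$ is sent up into $R$ by one of the $f_0$. On $\Delta_k$ this is clean via the level-consistency lemma and idempotency, but on the singleton and mixed-level components of $\DG^k$ it depends on the precise recipe of~\cite{Bulin14:reduc} (the order $\sqsubseteq$ and the minimal-edge lemma), so the care lies in confirming that that recipe never pushes a non-top tuple to the top level. Everything else is bookkeeping: once (a) and (b) are in place the $u$-inequality collapses exactly onto the $\rho$-inequality for $\omega$, and the identity-preservation is inherited verbatim from the unweighted construction.
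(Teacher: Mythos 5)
Your overall strategy is exactly the paper's: extend each operation in $C$ to an operation on $V^{\DG}$ via the Bulin et al.\ construction (sending projections to projections), transfer the weights unchanged, dispose of the $0$-weighted edge relation trivially, and reduce the inequality for $u$ to the inequality for $\rho$ via the implication ``$f_0(x_1,\ldots,x_k)\in R$ implies $x_1,\ldots,x_k\in R$''. The bijection $C\to C_0$, the preservation of linear balanced identities, the WNU/cyclic/symmetric corollaries, and the case $(x_1,\ldots,x_k)\in R^k$ all match the paper's proof.

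There is, however, one step that fails as written. Your fact (b), and the ensuing claim that \emph{every} summand $\omega_0(f_0)\cdot u(f_0(x_1,\ldots,x_k))$ vanishes when $(x_1,\ldots,x_k)\notin R^k$, is false for the projections in $C_0$: you extend each projection on $D$ to the corresponding projection on $V^{\DG}$ (as you must, to preserve the second condition in the definition of a weighted polymorphism), and $\mathrm{pr}_i$ sends a mixed tuple with $x_i\in R$ but $x_j\notin R$ for some $j$ to an element of $R$, producing the possibly non-zero term $\omega_0(\mathrm{pr}_i)\cdot\rho(x_i)$. The implication ``output in $R$ implies all inputs in $R$'' holds only for the non-projection extensions $f_\lambda^{\DG}$ (by inspection of Cases 1--3 of the construction: Case 1b needs inputs in $R^k$, Case 2 outputs a vertex at the common non-top level, Case 3 outputs a $\sqsubseteq$-minimal, hence minimal-level, component). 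The paper closes exactly this hole with one extra observation: for a tuple outside $R^k$ the only surviving terms come from projections, whose weights are non-positive, so the sum is still $\le 0$. Your argument needs that same observation (and, strictly speaking, the normalisation that projections do not receive positive weight); without it the $u$-inequality is not established.
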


\begin{proof} It is shown in~\cite{Bulin13:reduc,Bulin14:reduc} how polymorphisms of $\rho$ can be transformed (in fact, extended) to polymorphisms of $\DG$
in such a way that any set of linear balanced identities carries over. The transformation there is designed to preserve not only balanced
identities, and, for our purposes, we can use a simplified version of it. Let $C_0$ be obtained from $C$  by
applying this (simplified) transformation to all non-projection operations in $C$ and extending projection operations in $C$ so that they stay projection operations. Obtain $\omega_0$ from $\omega$ by using this bijection between $C_0$ and $C$, i.e. keep the weights of operations the same.
Then $\omega_0$ will be a weighted polymorphism of $\DG$. Indeed, since $\DG$ is 0-weighted, the last condition in the definition of a weighted polymorphism will be trivially satisfied, while the other conditions trivially carry over.
Hence, it only remains to ensure that $\omega_0$ is a weighted polymorphism of $u$. For this, we extend the operations $f$ from $C$ to operations $f_0$ on $V^{\DG}$
in such a way that, for any $v_1,\ldots,v_k\in V^{\DG}$, we have $f_0(v_1,\ldots,v_k)\in R$ only if $v_1,\ldots,v_k\in R$.
With this condition, the fact that $\omega_0$ is a weighted polymorphism of $u$ follows from the fact
that $\omega$ is a weighted polymorphism of $\rho$, as we show in the rest of the proof.

Let $\Sigma$ be a set of linear balanced identities in operations symbols $\{f_\lambda \mid \lambda\in \Lambda\}$
such that, interpreting each $f_\lambda$ as a specific operation $f_\lambda^\Ast\in C$, the operations
$\{f_\lambda^\Ast \mid \lambda\in \Lambda\}$ satisfy $\Sigma$. We can without loss of generality assume that
$\{f_\lambda^\Ast \mid \lambda\in \Lambda\}$ is the set of all non-projection operations in $C$.

We will extend each projection operation in $C$ to the corresponding projection on $V^\DG$ and each non-projection operation $f_\lambda^\Ast\in C$ to a polymorphism $f_\lambda^\DG$ of $\DG$
in such a way that $\{f_\lambda^\DG \mid \lambda\in \Lambda\}$ will also satisfy $\Sigma$.
The construction will also ensure that $\omega_0$ obtained from $\omega$ as described above is indeed a weighted polymorphism of $u$.

As in~\cite{Bulin13:reduc,Bulin14:reduc}, let the digraph $\Z$ be the zigzag with vertices labelled 00, 01, 10 and 11, such that we describe the oriented path 00 $\rightarrow$ 01 $\leftarrow$ 10 $\rightarrow$ 11. Given a vertex pair $\{x,y\}$ in the zigzag, define the operation $\wedge$
such that $x\wedge y$ is the vertex closer to 00. For each $\lambda\in \Lambda$, let $f_\lambda^{\Z}(x_1,...,x_k)=\bigwedge^{k}_{i=1}x_i$
where $k$ is the arity of $f_\lambda$. It is clear that the set $\{f_\lambda^\Z \mid \lambda\in \Lambda\}$ satisfies any balanced set of identities.

Now we define polymorphisms $\{f_{\lambda}^{\DG}\mid\lambda\in\Lambda\}$. Fix $\lambda\in\Lambda$, assume that $f_\lambda$ is a $k$-ary non-projection operation and let $\textbf{c}\in (V^\DG)^k$. If $\mathbf{c}\in R^k$ then $(f_{\lambda}^{\Ast})^{(k)}(\textbf{c})$ will denote the element of $R$ obtained by applying
$f_\lambda^\Ast$ to the tuples $c_1,\ldots,c_k\in R$ component-wise. Note that $(f_{\lambda}^{\Ast})^{(k)}(\textbf{c})\in R$ because
$f_\lambda^\Ast$ is a polymorphism of $R$.
Similarly, we can apply $f_\lambda^\Ast$ to elements $e_1,\ldots,e_k \in D\times R$ and obtained again an element $(f_{\lambda}^{\A})^{k+1}(e_1,...,e_k)$ from $D\times R$.

Construct $f_{\lambda}^{\DG}$ as follows:\\
\\
$\textbf{Case 1.}$ $\textbf{c}\in D^k \cup R^k$.\\
1a. If $\textbf{c}\in D^k$, we define $f_{\lambda}^{\DG}(\textbf{c}) = f_{\lambda}^{\Ast}(\textbf{c})$. \\
1b. If $\textbf{c}\in R^k$, we define $f_{\lambda}^{\DG}(\textbf{c}) = (f_{\lambda}^{\Ast})^{(k)}(\textbf{c})$. \\
\\
$\textbf{Case 2.}$ $\textbf{c}\in\Delta_k \backslash (D^k \cup R^k)$.\\
Let $e_i=\epsilon(c_i)$ for $1\le i\le k$ and $e=(f_{\lambda}^{\Ast})^{k+1}(e_1,...,e_k)$. Let $1\le l\le k$ be minimal such that $c_i\in\P_{e_i,l}$ for all $1\le i\le k$. \\
2a. If $\P_{e,l}$ is a single edge, then we define $f_{\lambda}^{\DG}(\textbf{c})$ to be the vertex from $\P_{e,l}$ having the same level as all the $c_i$'s. \\
If $\P_{e,l}$ is a zigzag then at least one of the $\P_{e_i,l}$'s is a zigzag as well. For every $1\le i\le k$ such that $\P_{e_i,l}$ is a zigzag let $\Phi_i:\P_{e_i,l}\rightarrow\Z$ be the unique isomorphism. Let $\Phi$ denote the isomorphism from $\P_{e,l}$ to $\Z$.  \\
2b. If all of the $\P_{e_i,l}$'s are zigzags, then $f_{\lambda}^{\DG}(\textbf{c})=\Phi^{-1}(f_{\lambda}^{\Z}(\Phi_1(c_1),...,\Phi_m(c_k)))$.  \\
2c. Else, we define $f_{\lambda}^{\DG}(\textbf{c})$ to be the $\sqsubseteq$-minimal element from the set $\{\Phi^{-1}(\Phi_i(c_i))|\P_{e_i,l}$ is a zigzag$\}$  \\
\\
$\textbf{Case 3.}$ $\textbf{c}\notin\Delta_k$.\\
Define $f_{\lambda}^{\DG}(\textbf{c})$ to be the $\sqsubseteq$-minimal element from the set $\{c_1,...,c_k\}$.
\\
\\
The definition of $f_\lambda^\DG$ in~\cite{Bulin14:reduc} is similar, but Case 3 there is split into three subcases (3a)-(3c), which is unnecessary for our purposes, as we use their (3c) throughout our Case 3.
The proof that $f_\lambda^\DG$ is a polymorphism of $\DG$ is a subset of the proof of Claim 5.7 in~\cite{Bulin14:reduc}.

It remains to show that $\omega_0$ is a weighted polymorphism of $u$, i.e. $\omega_0$ and $u$ satisfy the third condition
in the definition of a weighted polymorphism. When applied to $u$, this condition says that, for any $x_1,\ldots,x_k\in V^{\DG}$,
we have $\sum_{f\in C_0}{\omega_0(f)\cdot u(f(x_1,\ldots,x_k))}\le 0$.
Recall that, by definition, $u(x)=0$ for all $x\in V^{\DG}\backslash R$.
By inspecting our definition of $f_\lambda^\DG$, it is clear that if $f_\lambda^\DG(x_1,\ldots,x_k)\in R$ then
$x_1,\ldots,x_k\in R$. Thus, if not all $x_1,\ldots,x_k$ are in $R$, the only possible non-0 terms $u(f(x_1,\ldots,x_k))$ in the sum correspond to projections (whose weights are non-positive by definition), and hence the whole sum is non-positive. On the other hand, if all $x_1,\ldots,x_k$ are in $R$ then $f(x_1,\ldots,x_k)\in R$ and so $u(f(x_1,\ldots,x_k))=\rho(f(x_1,\ldots,x_k))$. In this case, the inequality holds because the inequality $\sum_{f\in C}{\omega(f)\cdot \rho(f(x_1,\ldots,x_k))}\le 0$ holds for $\omega$.

This finishes the proof of Theorem~\ref{thm:algebra}.
\end{proof}


\bibliographystyle{plain}

\end{document}